\documentclass[letter]{article}

\usepackage[T1]{fontenc}
\usepackage[utf8x]{inputenc}
\usepackage[english]{babel}
\usepackage{natbib}
\usepackage{color,placeins}
\usepackage{graphicx,amsmath}
\usepackage{graphics} 
\usepackage{epsfig,epstopdf} 
\usepackage{mathptmx} 
\usepackage{times} 
\usepackage{amsmath} 
\usepackage{amssymb}  
\usepackage{amsfonts}
\usepackage{amsmath}
\usepackage{amssymb}
\usepackage{psfrag}
\usepackage{float}

\newtheorem{lemma}{Lemma}[section]
\newtheorem{theorem}{Theorem}[section]
\newtheorem{remark}{Remark}[section]

\def\bi{\begin{itemize}}
\def\ei{\end{itemize}}
\def\bn{\begin{enumerate}}
\def\en{\end{enumerate}}
\def\bq{\begin{eqnarray}}
\def\eq{\end{eqnarray}}
\def\be{\begin{equation}}
\def\ee{\end{equation}}
\def\bea{\begin{eqnarray}}
\def\eea{\end{eqnarray}}
\def\beann{\begin{eqnarray*}}
\def\eeann{\end{eqnarray*}}
\def\bsea{\begin{subeqnarray}}
\def\esea{\end{subeqnarray}}
\def\bmat{\left[ \begin{array}}
\def\emat{\end{array} \right]}
%

%
\def\proof{\noindent{\bf{\em Proof:}\ \ }}
\def\QED{\mbox{\rule[0pt]{1.5ex}{1.5ex}}}
\def\endproof{\hspace*{\fill}~\QED\par\endtrivlist\unskip}


%

%
%

%

%
%

\def\s{\star}
\newfont{\BB}{msbm10}
\newfont{\bb}{msbm8}

\def\I{\mbox{\BB I}}

\newcommand{\bmx}{\begin{matrix}}
\newcommand{\emx}{\end{matrix}}
\newcommand{\ba}{\begin{array}}
\newcommand{\ea}{\end{array}}

\def\nn{\nonumber}
\def\bq{\begin{eqnarray}}
\def\eq{\end{eqnarray}}

\def\bsmat{\left[ \begin{smallmatrix}}
\def\esmat{\end{smallmatrix} \right]}

\usepackage[colorlinks=true, allcolors=blue]{hyperref}


\usepackage[sfdefault,lf]{carlito}
\usepackage[parfill]{parskip}

\usepackage{fancyhdr}
\usepackage{natbib}
\usepackage{authblk}
\setlength{\headheight}{41pt}

\usepackage[a4paper,top=3cm,bottom=2cm,left=3cm,right=3cm,marginparwidth=1.75cm]{geometry}

\usepackage{amsmath}
\usepackage{graphicx}
\usepackage{booktabs}

\usepackage[colorinlistoftodos]{todonotes}

\fancyhead[L]{Posted: \today}
\pagestyle{plain}
\title{On the Control Effort in Output Feedback Sliding Mode Control of Sampled-Data Systems}
\author[1]{Thang Nguyen}
\author[2]{Christopher Edwards}
\author[3]{Vahid Azimi}
\author[4,*]{Wu-Chung Su}
\affil[1]{Department of Mechanical Engineering, Northern Arizona University, Flagstaff, Arizona, USA}
\affil[2]{College of Engineering, Mathematics and Physical Sciences, University of Exeter, Exeter, UK, EX4 4QF, UK}
\affil[3]{School of Electrical and Computer Engineering, Georgia Institute of Technology, Atlanta, GA, USA}
\affil[4]{Department of Electrical Engineering, National Chung-Hsing University, 250 Kuo-Kuang Road, Taichung, Taiwan, Republic of China}
\affil[*]{Corresponding author: email: wcsu@dragon.nchu.edu.tw}
\date{}

\begin{document}
\maketitle
\thispagestyle{fancy}

\begin{abstract}

In this paper, the problem of output feedback sliding mode control of linear sampled-data multi-input multi-output systems is considered.  Existing sliding mode control schemes can attenuate the influence of an external disturbance by driving system states onto a sliding surface. However, they can exhibit high gains during transients, which can be $O(1/T)$ where $T$ is the sampling time period. To address this problem, a new strategy, which employs disturbance approximation, is proposed so that the control effort will be $O(1)$. The new method avoids deadbeat phenomena and hence, it will be less sensitive to noise. Theoretical analysis is provided to show the convergence and robustness of the proposed method. Simulations were conducted to show the efficiency of the proposed approach.
\end{abstract}

\section{INTRODUCTION}

In practice, under the influence of disturbances and uncertainties, control performance of a dynamical system can deteriorate. To overcome this problem, sliding mode control has proved a powerful tool to reject disturbances and uncertainties using discontinuous control action with infinite switching frequency \cite{edwards1998}. This is applicable for continuous-time systems. Nowadays, the extensive use of digital devices in control systems necessiates the study of sample/hold effects when designing a control algorithm. Due to hardware limit, there is no control action with such infinite switching frequency as in continuous-time systems. Whilst, theoretically, sliding mode control has the ability to reject matched external disturbances or uncertainties for continuous-time systems \citep{DRAZENOVIC1969,edwards1998}, an ideal sliding mode cannot be obtained in \emph{sampled-data systems} due to the sampling/hold effect. In this situation, only ``quasi sliding modes'' are achieved, i.e. the system state is kept in a boundary layer around the sliding surface \citep{milo85}.

Numerous research works have been conducted addressing the problem of state feedback sliding mode control of sampled-data systems; see \citep{su,abxu07,Du2016,Niu_IETCTA_2010,Xu_IETCTA_2013,Behera_IETCTA_2015} and the references therein. The most common feature is that control laws are chattering-free and maintain an $O(T^2)$ quasi-sliding motion. In \citep{su}, a non-switching control method for a class of \emph{sampled-data systems} was exploited to avoid the chattering phenomena during the quasi sliding mode phase. In the state feedback sliding mode control problem, a dead-beat type control law  based on the one-step delayed disturbance approximation method results in a quasi sliding mode boundary layer of thickness $O(T^2)$, where $T$ is the sampling period \citep{su}. With this accuracy of quasi sliding mode, the state is kept in ultimate $O(T)$ bound \citep{abxu07}. An $O(T^2)$ quasi sliding mode can be obtained in \emph{sampled-data systems} in the context of state feedback \citep{abxu07}.

In this note, we aim to address the output feedback sliding mode control problem for linear sampled-data multi-input multi-output systems in the presence of external disturbances. Some papers in the literature proposed several output feedback sliding mode control methods for sampled-data systems in \cite{Lin2010,lai07,nguyen09b,Milo2013, Nguyen2016}. The methods in \cite{Lin2010,Milo2013} were only proposed for single input single output systems, which limit their applications. Similarly, a minimum variance control scheme in \cite{mitic2004} was presented where a quasi-sliding mode with $O(T^3)$ accuracy was achieved for single-input single-output systems. In \citep{nguyen09b,Nguyen10,Nguyen2016}, output feedback sliding mode control schemes were proposed for multi-input multi-output systems to achieve quasi-sliding motion with boundary layers of $O(T^2)$ and $O(T^3)$ respectively. However, the control signals in \citep{nguyen09b,Nguyen10,Nguyen2016} are of order $O(1/T)$, which can be detrimental to system hardware such as actuators during transients or in the presence of disturbances. Moreover, these (effectively) high gain controllers can be sensitive to measurement noise, which deteriorates the control performance. In this paper, improved versions of the control schemes in \citep{nguyen09b,Nguyen2016} are proposed to avoid possible high gain control efforts. Our paper exploits sampled-data predictors to estimate disturbances.

The contributions of the paper are:
\begin{itemize}
\item[i)] to provide a control technique to reduce the high-gain control effect during transient while maintaining a certain level of desired performance.
\item[ii)] to provide theoretical analysis of the proposed scheme. Note that the preliminary results in \cite{nguyen2017improvement} still omit complete theoretical analysis due to space reason.
\item[iii)] to evaluate the performance of the proposed scheme across different cases. Unlike the conference version in \cite{nguyen2017improvement}, in this paper, we consider the influence of noise on the new method.
\end{itemize}

Note that in the conference version \cite{nguyen2017improvement}, the disturbance and its first and second derivatives are required to be bounded. Meanwhile, in this paper, we consider a more general case in which only the disturbance and its first derivative are bounded.

In this paper, $\lambda\{A\}$ represents the spectrum of the matrix $A$, while $I_m$ is the identity matrix of order $m$. A vector function $f(t,s)\in R^n$ is said to be $O(s)$ over an interval $[t_1,t_2]$, if there exist positive constants $K$ and an $s^*$ such that $\|f(t,s)\|\leq Ks, \quad \forall s\in[0,s^*],\quad \forall t\in[t_1,t_2]$ \citep{kok86}. Throughout the paper, $f[k]$ stands for $f(kT)$, where $k=0, 1, 2, ...$ describes the index of the discrete-time sequence.

The paper is organized as follows. Section II presents the formulation of the problem. The main results are described in Section III. Simulation results are implemented to illustrate the efficacy of the proposed schemes in Section IV. The final section offers some conclusions. 

\section{PROBLEM FORMULATION}

Consider the following system
\begin{eqnarray}\label{a-0}
    \dot{x}(t)&=&Ax(t)+B(u(t)+f(t))\\
    \nonumber y(t)&=&Cx(t),
\end{eqnarray}
where $x(t)\in R^n$ is the system state, $u(t)\in R^m$ is the system control input, $y(t)\in R^p$ is the system output, $f(t)\in R^m$ is an unknown bounded external disturbance, with $m\leq p<n$. A switching function based on output information will be considered:
\be\label{a}
         s=Hy.
\ee
\newtheorem{assumption}{Assumption}

\begin{assumption}\label{as1}
The disturbance $f(t)$ and its first derivative are bounded.
\end{assumption}
\begin{assumption}\label{as2}
The disturbance $f(t)$ and its first and second derivatives are bounded.
\end{assumption}
\begin{assumption}\label{as3}
There exists a full rank $m \times p$ matrix $H$ such that the square matrix $HCB$ is invertible and the continuous-time sliding surface, $s(t)=0$, is a legitimate design in the sense that the reduced order motion is stable \citep{edspu95}.
\end{assumption}

\begin{remark}
According to \citep{edspu95}, if system (\ref{a-0}) has relative degree equal to one with stable invariant zeros and $B$ and $C$ have full rank \citep{edspu95}, then Assumption \ref{as3} is satisfied. A method to design matrix $H$ can be based on the framework in \citep{edspu95}.
\end{remark}

\begin{remark}
In this paper, we consider a more general class of disturbances, which only requires the boundedness of the disturbance and its first derivative. In Assumption \ref{as2}, which is used in \cite{nguyen09b,Nguyen2016,nguyen2017improvement}, the disturbance and its first and second derivatives are bounded. The disturbance considered in \cite{su,abxu07} is smooth.
\end{remark}

The sampled-data version of (\ref{a-0}) is
\begin{align}\label{b-3a}
    x[k+1]=&\Phi x[k]+\Gamma u[k]+d[k]\\
 \nonumber y[k]=&Cx[k],
\end{align}
where
\bea
 	\Phi&=&e^{AT}=\sum_{k=0}^\infty \frac{(TA)^k}{k!},\\
 	\Gamma&=&\int_0^T e^{A\tau} d\tau B=\int_0^T\sum_{k=0}^\infty \frac{(\tau A)^k}{k!}Bd\tau,
\eea
and in (\ref{b-3a}) the disturbance is
\be\label{dk}
	d[k]=\int_0^T e^{A\tau}Bf((k+1)T-\tau)d\tau.
\ee
Define
\bea
	\bar{A}&=&\frac{1}{T}(\Phi-I_n),\\
	\bar{\bar{A}}&=&\frac{1}{T^2}(\Phi-I_n-TA)=\sum_{k=2}^{\infty}T^{k-2}\frac{A^k}{k!}=O(1),\\
	\bar{B}&=&\frac{\Gamma}{T},\\
	\bar{\bar{B}}&=&\frac{1}{T^2}(\Gamma-TB).
\eea
With the above definitions, the system matrices of the discrete-time system (\ref{b-3a}) satisfy
\bea
    \Phi&=&I_n+T\bar{A}=I_n+T(A+T\bar{\bar{A}}) \label{phi}\\
    \Gamma &=&T\bar{B}=T(B+T\bar{\bar{B}}) \label{gammaB}.
\eea

Due to the sampling effect, the disturbance $d[k]$ in the sampled-data system contains unmatched components: details of its properties are described in the following lemmas.

\begin{lemma}\label{l1}
If Assumption \ref{as1} holds, then
\begin{subequations}\label{lem-1}
\begin{align}
     d[k]&=\Gamma f[k]+d^\prime[k],\label{dk_lem1}\\
     d[k]&-d[k-1]=O(T^2) \label{dk2_lem1}
\end{align}
\end{subequations}
where
\be\label{dpk_l1}
	d^\prime[k]=\int_0^T e^{A\tau}B\int_{kT}^{(k+1)T-\tau}v(\beta)d\beta d\tau=O(T^2),
\ee
\begin{equation}
    v(t)=df(t)/dt.
\end{equation}
\end{lemma}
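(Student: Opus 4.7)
The plan is to work directly from the integral representation \eqref{dk} of $d[k]$ and exploit only what Assumption \ref{as1} gives us, namely boundedness of $f$ and of $v=\dot f$. Since we do not have boundedness of $\ddot f$, we cannot invoke a second-order Taylor expansion of $f$ around $kT$; instead, the first-order fundamental theorem of calculus is the right tool.

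For \eqref{dk_lem1}, I would write
\[
f\bigl((k+1)T-\tau\bigr)=f(kT)+\int_{kT}^{(k+1)T-\tau} v(\beta)\,d\beta
\]
and substitute into \eqref{dk}. The first term factors cleanly as $\bigl(\int_0^T e^{A\tau}d\tau\bigr)Bf[k]=\Gamma f[k]$, and the remainder is exactly the double integral appearing in \eqref{dpk_l1}, which is $d^\prime[k]$ by definition. To establish $d^\prime[k]=O(T^2)$, I would bound the inner integral by $\|v\|_\infty(T-\tau)$ and the matrix exponential by $\sup_{\tau\in[0,T]}\|e^{A\tau}\|\|B\|$ (which is $O(1)$ for $T$ in a bounded range), giving
\[
\|d^\prime[k]\|\le \|v\|_\infty\|B\|\sup_{\tau\in[0,T]}\|e^{A\tau}\|\int_0^T (T-\tau)\,d\tau=O(T^2).
\]

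For \eqref{dk2_lem1}, I would write the difference as a single integral,
\[
d[k]-d[k-1]=\int_0^T e^{A\tau}B\bigl[f\bigl((k+1)T-\tau\bigr)-f(kT-\tau)\bigr]\,d\tau,
\]
and bound the bracket using
\[
f\bigl((k+1)T-\tau\bigr)-f(kT-\tau)=\int_{kT-\tau}^{(k+1)T-\tau} v(\beta)\,d\beta,
\]
whose absolute value is at most $T\|v\|_\infty$. Integrating the $O(1)$ factor $\|e^{A\tau}B\|$ over $[0,T]$ then contributes another factor of $T$, yielding $\|d[k]-d[k-1]\|=O(T^2)$.

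There is no genuine obstacle here; the only subtle point is that, with Assumption \ref{as1} in force (rather than Assumption \ref{as2}), one must resist the temptation to split $d^\prime[k]$ further into a matched $O(T^2)$ piece plus an $O(T^3)$ unmatched residue, because that finer split would require a bound on $\ddot f$. The proof as sketched above stays strictly within the hypotheses and gives the two stated orders without any appeal to higher regularity of $f$.
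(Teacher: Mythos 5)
Your proof is correct and, for the decomposition \eqref{dk_lem1} and the bound on $d^\prime[k]$, it follows essentially the same route as the paper: first-order fundamental theorem of calculus for $f((k+1)T-\tau)$ about $f[k]$, then bounding the inner integral by $\|v\|_\infty (T-\tau)$ (your version is in fact slightly cleaner, since the paper's chain of inequalities silently drops the constant $V$). For \eqref{dk2_lem1} you take a mildly different path: you subtract the two defining integrals directly and bound $f((k+1)T-\tau)-f(kT-\tau)=\int_{kT-\tau}^{(k+1)T-\tau}v(\beta)\,d\beta$ by $T\|v\|_\infty$, whereas the paper reuses the decomposition of part (a), writing $d[k]-d[k-1]=\Gamma(f[k]-f[k-1])+(d^\prime[k]-d^\prime[k-1])$ and bounding each piece; both arguments are equally elementary and rest on the same hypothesis (boundedness of $v$ only), so the difference is cosmetic and neither gains generality over the other.
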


\begin{proof}
Consider $0\leq\tau<T$ and express $f((k+1)T-\tau)$ as
\be\label{f_lem1}
	f((k+1)T-\tau)=f[k]+\int_{kT}^{(k+1)T-\tau}v(\beta)d\beta.
\ee
Substituting (\ref{f_lem1}) into (\ref{dk}), we obtain
\bea
\nn	d[k]&=&\int_0^T e^{A\tau}B(f[k]+\int_{kT}^{(k+1)T-\tau}v(\beta)d\beta)d\tau\\
\nn		&=&\int_0^T e^{A\tau}Bf[k]d\tau+d^\prime[k]\\
		&=&\Gamma f[k]+d^\prime[k].
\eea
Assume $v(t)$ is bounded by $V$, namely $v(t)\leq V$. We have
\bea
\nn	\|d^\prime[k]\|&\leq& \|\int_0^T e^{A\tau}B\int_{kT}^{(k+1)T-\tau}v(\beta)d\beta d\tau\| \\
\nn			&\leq& \|\int_0^T e^{A\tau}B(T-\tau) d\tau\|\\
\nn			&\leq& \|\int_0^T e^{A\tau}B T d\tau\|\\
			&=&T\Gamma=O(T^2).
\eea

We have
\bea
\nn	d[k]-d[k-1]&=&\Gamma (f[k]-f[k-1])+(d^\prime[k]-d^\prime [k-1]\\
			&=&\Gamma \int_{kT}^{(k+1)T}v(t) dt+(d^\prime[k]-d^\prime [k-1].
\eea
Since
\be
	\|\int_{kT}^{(k+1)T}v(t) dt\|\leq \int_{kT}^{(k+1)T}V dt=TV=O(T),
\ee
$\Gamma=O(T)$, and $d^\prime[k]=O(T^2)$,
\be
\nn	d[k]-d[k-1]=O(T^2).	
\ee
\end{proof}

The following lemma was employed in  \citep{abxu07,nguyen09b,Nguyen10,Nguyen2016,nguyen2017improvement}.
\begin{lemma}\label{l2}
If Assumption \ref{as2} holds, then
\begin{subequations}\label{lem-2}
\begin{align}
     d[k]&=\Gamma f[k]+d^\prime[k] \label{dk_lem2}\\
     d[k]&-d[k-1]=O(T^2), \label{dk2_lem2}\\
     d[k]&-2d[k-1]+d[k-2]=O(T^3), \label{dk3_lem2}
\end{align}
\end{subequations}
where
\bea \label{dpk_l2}
	d^\prime[k]&=&\frac{T}{2}\Gamma v[k]+T^3\Delta d[k]=O(T^2),\\
\label{delta_dk}
     \nn\Delta d[k]&=&\hat{M}v[k]+\frac{1}{T^3}\int_0^T e^{A\tau}B\int_{kT}^{(k+1)T-\tau}\int_{kT}^{\beta}\dot{v}(\sigma)d \sigma d\beta d\tau\\
    &=&O(1),
\eea
Note that in (\ref{dk})
\begin{equation}
    \hat{M}=(-\frac{1}{12}A-\frac{T}{12}\bar{\bar{A}})B=O(1).
\end{equation}

\end{lemma}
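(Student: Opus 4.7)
Since Assumption \ref{as2} strengthens Assumption \ref{as1}, the decomposition (\ref{dk_lem2}) with the crude bound $d'[k]=O(T^2)$ and the first-difference estimate (\ref{dk2_lem2}) are immediate from Lemma \ref{l1}. Thus the real content of Lemma \ref{l2} is (i) the sharper structure (\ref{dpk_l2})--(\ref{delta_dk}) for $d'[k]$, and (ii) the third-order second-difference bound (\ref{dk3_lem2}). My plan is to obtain (i) by pushing the Taylor expansion used in Lemma \ref{l1} one order further, and then to derive (ii) directly from (i).

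For (i), I would apply a second-order Taylor-with-integral-remainder expansion of $f$ about $kT$, namely
\begin{equation*}
f((k+1)T-\tau) = f[k] + (T-\tau)v[k] + \int_{kT}^{(k+1)T-\tau}\!\int_{kT}^{\beta}\dot{v}(\sigma)\,d\sigma\,d\beta,
\end{equation*}
and substitute it into (\ref{dk}). The $f[k]$ piece reproduces $\Gamma f[k]$; the double-integral remainder is exactly the last term in (\ref{delta_dk}), and a straightforward triple-integral estimate using $\|\dot v\|_\infty <\infty$ (Assumption \ref{as2}) bounds it by $\|e^{A\tau}B\|\cdot T \cdot T \cdot \|\dot v\|_\infty \cdot T = O(T^3)$, giving the $O(1)$ claim for $\Delta d[k]$. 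The crux is therefore to show
\begin{equation*}
\int_0^T e^{A\tau}B(T-\tau)\,d\tau \;=\; \frac{T}{2}\Gamma + T^3\hat{M},
\end{equation*}
with $\hat{M}=(-\tfrac{1}{12}A-\tfrac{T}{12}\bar{\bar{A}})B$. I would establish this by expanding $e^{A\tau}=\sum_{k\ge 0}(A\tau)^k/k!$, using the elementary identity $\int_0^T\tau^k(T-\tau)\,d\tau = T^{k+2}/[(k+1)(k+2)]$, and then regrouping the resulting series in $A^kT^{k+2}B$ using the definitions $\Phi=I+TA+T^2\bar{\bar{A}}$ and $\Gamma=T(B+T\bar{\bar{B}})$ to collapse the higher-order tail into the $\bar{\bar{A}}$ term. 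This bookkeeping with the series expansions is the main obstacle, since one must line up factorials and cross-check that every power of $T$ beyond $T^3$ is captured either by $\bar{\bar{A}}$ or by the triple-integral remainder.

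For (ii), I would use the decomposition already obtained:
\begin{equation*}
d[k]-2d[k-1]+d[k-2] = \Gamma\bigl(f[k]-2f[k-1]+f[k-2]\bigr) + \bigl(d'[k]-2d'[k-1]+d'[k-2]\bigr),
\end{equation*}
and argue that each piece is $O(T^3)$. The second difference of $f$ can be written as $\int_0^T\!\int_{(k-2)T+t}^{(k-1)T+t}\dot v(\sigma)\,d\sigma\,dt$, which is $O(T^2)$ since $\ddot f$ is bounded; multiplying by $\Gamma=O(T)$ yields $O(T^3)$. For the second piece I substitute (\ref{dpk_l2}): the $T^3\Delta d[k]$ contribution is $T^3\cdot O(1)=O(T^3)$ automatically, while for $\frac{T}{2}\Gamma\bigl(v[k]-2v[k-1]+v[k-2]\bigr)$ I only need $v[k]-v[k-1]=\int_{(k-1)T}^{kT}\dot v(s)\,ds = O(T)$, so the second difference of $v$ is $O(T)$ and the whole term is $\tfrac{T}{2}\cdot O(T)\cdot O(T)=O(T^3)$. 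Notice this step uses only boundedness of $\ddot f$ (not $\dddot f$), matching Assumption \ref{as2}. Combining the two bounds delivers (\ref{dk3_lem2}) and completes the proof.
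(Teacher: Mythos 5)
The paper itself does not prove this lemma --- it simply cites \cite{nguyen09b} --- so a self-contained argument like yours is the natural route, and its skeleton (second-order Taylor expansion of $f((k+1)T-\tau)$ about $kT$ with integral remainder, a triple-integral bound for the remainder, and second differences for (\ref{dk3_lem2})) is sound. In particular your treatment of (\ref{dk3_lem2}) is correct and, as you note, uses only boundedness of $\ddot f$: $\Gamma\bigl(f[k]-2f[k-1]+f[k-2]\bigr)=O(T)\cdot O(T^2)$, $\tfrac{T}{2}\Gamma\bigl(v[k]-2v[k-1]+v[k-2]\bigr)=O(T)\cdot O(T)\cdot O(T)$, and the $T^3\Delta d$ contributions are trivially $O(T^3)$.

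However, the step you single out as the crux would fail as stated: the identity $\int_0^T e^{A\tau}B(T-\tau)\,d\tau=\tfrac{T}{2}\Gamma+T^3\hat{M}$ with $\hat{M}=(-\tfrac{1}{12}A-\tfrac{T}{12}\bar{\bar{A}})B$ is not exact. Using your own integral formula, the left side equals $\sum_{j\ge 0}\tfrac{T^{j+2}A^{j}}{(j+2)!}B$, so its exact deviation from $\tfrac{T}{2}\Gamma$ is $-T^3\sum_{j\ge0}\tfrac{(j+1)T^{j}A^{j+1}}{2(j+3)!}B=-\tfrac{T^3}{12}AB-\tfrac{T^4}{24}A^2B-\tfrac{T^5}{80}A^3B-\cdots$, whereas $T^3\hat{M}=-\tfrac{T^3}{12}AB-\tfrac{T^4}{24}A^2B-\tfrac{T^5}{72}A^3B-\cdots$; the two disagree from the $T^5$ term on (leading discrepancy $\tfrac{1}{720}T^5A^3B$; a scalar check with $A=a$, $B=1$, comparing $e^{x}-1-x$ against $(e^{x}-1)(\tfrac{x}{2}-\tfrac{x^2}{12})$ at order $x^5$, confirms this). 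So no regrouping of the exponential series will collapse the tail exactly into $-\tfrac{T}{12}\bar{\bar{A}}B$. The repair is harmless for every order claim in the lemma: prove the exact kernel formula above, observe that it equals $\tfrac{T}{2}\Gamma+T^3\hat{M}+O(T^5)$, and absorb the $O(T^5)$ discrepancy into $T^3\Delta d[k]$ (equivalently, read $\hat{M}$ as the exact kernel coefficient, which differs from the stated $\hat{M}$ by $O(T^2)$ and is still $O(1)$); with that amendment, (\ref{dpk_l2}), (\ref{delta_dk}), (\ref{dk2_lem2}) and (\ref{dk3_lem2}) all follow as you outline. You should state this adjustment explicitly rather than claim the exact identity, since as written that claim is false and the strict equality in (\ref{delta_dk}) holds only modulo an $O(T^2)$ correction to $\hat{M}$.
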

\begin{proof}
The proof is presented in \cite{nguyen09b}.
\end{proof}

\begin{remark}
A discrete-time model can be derived using the delta operator in \citep{Middleton_TAC_1986} from which a switching sliding mode control scheme is proposed to address state feedback control for a discrete-time system subject to an external disturbance \citep{Kumari_2016_IECON}. Meanwhile, our control law is non-switching in the context of output feedback. The problem in \citep{Kumari_2016_IECON} is suitable for fast sampling rates with more simple assumptions; i.e., there are no unmatched disturbance components in the discrete-time model. Furthermore, our problem is more general in the sense that it is not limited to fast sampling.
\end{remark}

Following the derivation in \citep{Nguyen2016}, we employ the following nonsingular transformation matrix
\begin{equation}\label{P-1}
    P_1=\left[\begin{matrix}M\\HC\end{matrix}\right],
\end{equation}
where $M\in R^{(n-m)\times n}$ and $MB=0$, which implies $\textrm{Range}(M^T)=\textrm{Null}(B^T)$. As demonstrated in \citep{Nguyen2016}, $P_1$ has full rank. Let the inverse of $P_1$ be partitioned as
\be\label{invP1}
    P_1^{-1}=\left[\begin{matrix}Q&R\end{matrix}\right]
\ee
where $Q$ has $n-m$ columns. Let $\left[\begin{matrix}\xi^T&s^T\end{matrix}\right]^T=P_1x$, then in the new coordinates
\begin{equation}\label{a-2}
    \left[\begin{matrix}\dot{\xi}\\\dot{s}\end{matrix}\right]=\left[\begin{matrix}MAQ&MAR\\HCAQ&HCAR\end{matrix}\right]\left[\begin{matrix}\xi\\s\end{matrix}\right]
        +\left[\begin{matrix}0\\HCB\end{matrix}\right](u+f).
\end{equation}

This is in ``normal form'', which implies that the sliding mode dynamics of system (\ref{a-2}) is
\begin{equation}\label{Ac}
    \dot{\xi}=MAQ\xi=A_c\xi
\end{equation}
where the eigenvalues of matrix $A_c$ contains any invariant zeros of (\ref{a-0}), \citep{edspu95}.

Now, consider the sampled-data version of the continuous-time system in (\ref{a-0}):
\begin{align}\label{b-3}
    \nonumber x[k+1]=&\Phi x[k]+\Gamma u[k]+d[k]\\
                    y[k]=&Cx[k]\\
          \nonumber s[k]=&Hy[k],
\end{align}
where the output feedback sliding vector is prescribed in (\ref{a}). The control methods in \citep{nguyen09b,Nguyen2016} designed for system (\ref{b-3}) can exhibit high gain transients of the order of $O(1/T)$ when the system state is far from the sliding surface. In this paper, our objective is to provide a solution to this high gain problem such that the control efforts are $O(1)$, but a certain level of accuracy of sliding mode is still guaranteed.

\section{MAIN RESULTS}

In this section, the improved versions of the schemes in \citep{nguyen09b,Nguyen2016} will be proposed. For convenience, we call the method in \citep{nguyen09b} Method~1 (M1), and the one in \citep{Nguyen2016} Method 2 (M2).

Using (\ref{phi}), (\ref{gammaB}), (\ref{dk_lem1}), and (\ref{P-1}), the system in (\ref{b-3}) is
\begin{align}\label{b-4}
    \nn\left[\begin{matrix}\xi[k+1]\\s[k+1]\end{matrix}\right]=&\left[\begin{matrix}I_{n-m}+TM\bar{A}Q&TM\bar{A}R\\THC\bar{A}Q&I_m+THC\bar{A}R\end{matrix}\right]\left[\begin{matrix}\xi[k]\\s[k]\end{matrix}\right]\\
        &+\left[\begin{matrix}TM\bar{B}\\THC\bar{B}\end{matrix}\right]u[k]+\left[\begin{matrix}d_{11}[k]\\d_{12}[k]\end{matrix}\right]
\end{align}
where
\bea
    d_{11}[k]&=&T^2M\bar{\bar{B}} f[k]+M d^\prime[k]=O(T^2), \label{d11}\\
    d_{12}[k]&=&THC\bar{B} f[k]+HCd^\prime[k]=O(T), \label{d12}
\eea
since $\bar{B}=B+T\bar{\bar{B}}$, $MB=0$, $M\bar{B}=O(T)$, and $d^\prime[k]=O(T^2)$ according to Lemma \ref{l1}.

The $s[k]$ dynamics in (\ref{b-4}) can be written as
\begin{equation}\label{ss}
    s[k+1]=(I_m+T\Omega_2)s[k]+T HC\bar{B}u[k]+g[k],
\end{equation}
where
\bea\label{gk}
    g[k]&=&T\Omega_1\xi[k]+d_{12}[k],\\
\label{om1}\Omega_1&=&HC\bar{A}Q,\\
\label{om2} \Omega_2&=&HC\bar{A}R.
\eea

As in \citep{ud89}, solving for $s[k+1]=0$ yields the discrete-time equivalent control law
\begin{equation}\label{ueq}
    u^{eq}[k]=-\frac{1}{T}(HC\bar{B})^{-1}((I_m+T\Omega_2)s[k]+g[k]),
\end{equation}
which is not physically implementable since it contains $g[k]$, which is unknown at time instant $k$ \cite{Nguyen2016}. The cause of the high gain phenomenon stems from the fact that the control laws are designed to force $s[k+1]=0$. To mitigate this problem, we design a control law such that
\be\label{sk}
	s[k+1]=\alpha s[k]
\ee
where $|\alpha|<1$ is a design parameter. Solving (\ref{sk}) for the equivalent control law,
\be\label{ueq2}
	 u[k]=-\frac{1}{T}(HC\bar{B})^{-1}(((1-\alpha)I_m+T\Omega_2)s[k]+g[k]).
\ee

\begin{remark}
According to Assumption \ref{as3}, $HCB$ is nonsingular. Since $\bar{B}=B+O(T)$ by construction, there is a small enough $T$ such as $HC\bar{B}$ is invertible. This was proved in \citep{Nguyen2016}.
\end{remark}

When the system state is not close to the origin such that $\xi[k]=O(1)$ and $s[k]=O(1)$, the expression in (\ref{gk}) implies that $g[k]=O(T)$. Choose $\alpha\in(0,1)$ such that
\be\label{beta}
	\beta\triangleq\frac{1-\alpha}{T}=O(1).
\ee

The expression in (\ref{gk}) shows that $g[k]=O(T)$. From (\ref{beta}), $1-\alpha=\beta T$ and thus,  the equivalent control $u[k]$ in (\ref{ueq2}) is $O(1)$. Note that $g[k]$ is unknown at time instant $[k]$. In the following, we will present methods to approximate $g[k]$ to obtain a physically realiseable control law.

\subsection{Development of a Modified Version of Method 1}
\label{c1}

In this subsection, we consider the case when Assumption \ref{as1} holds. From (\ref{d12}) and (\ref{gk}), $g[k]$ contains $f[k]$, which can be approximated by $f[k-1]$ due to the continuity and boundedness properties of $f(t)$ and its first derivative. Here, $\|f[k]-f[k-1]\|=\|\int_{(k-1)T}^{kT}v(\beta)d\beta\|\leq \|\int_{(k-1)T}^{kT}V d\beta\|=TV=O(T)$ where $\|v(t)\|\leq V$. As shown in \citep{nguyen09b}, $g[k]$ can be approximated by $g[k-1]$ which is computed from (\ref{ss}) as
\begin{equation}\label{d-1b}
    g[k-1]=s[k]-(I_m+T\Omega_2)s[k-1]-THC\bar{B}u[k-1].
\end{equation}

Hence, using $g[k-1]$ for $g[k]$ in (\ref{ueq2}) yields the expression
\be\label{mu1}
	 u[k]=-\frac{1}{T}(HC\bar{B})^{-1}(((1-\alpha)I_m+T\Omega_2)s[k]+g[k-1])
\ee
From (\ref{d-1b}) and (\ref{mu1}),
\bea\label{mu1b}
\nn u[k]&=&-\frac{1}{T}(HC\bar{B})^{-1}(((2-\alpha)I_m+T\Omega_2)s[k]\\
   	&&-(I_m+T\Omega_2)s[k-1])+u[k-1].
\eea
From (\ref{beta}), $1-\alpha=\beta T$ and (\ref{mu1})
\bea
\nn u[k]&=&-\frac{1}{T}(HC\bar{B})^{-1}((\beta T I_m+T\Omega_2)s[k]+g[k-1])\\
\nn &=&-(HC\bar{B})^{-1}((\beta I_m+\Omega_2)s[k]
		+\frac{g[k-1]}{T})\\
&=&O(1).\label{mu1c}
\eea

Next, we study the stability of the closed-loop system under the control law (\ref{mu1b}) in the absence of external disturbances. Let
\be
 \label{psi1}   \psi_1[k]=\left[\begin{matrix} \xi[k]\\ s[k]\\ \gamma[k]\end{matrix}\right],
\ee
where
\be
\label{u-1}
                \gamma[k]=T HC\bar{B}u[k].
\ee

From (\ref{ss}), (\ref{gk}), (\ref{mu1b}) and (\ref{u-1}),
\bea\label{gamma_dynamics}
\nn	\gamma[k+1]&=&-((2-\alpha)I_m+T\Omega_2)s[k+1]+(I_m+T\Omega_2)s[k])+\gamma[k]\\
\nn			&=&-((2-\alpha)I_m+T\Omega_2)((I_m+T\Omega_2)s[k]+\gamma[k]+g[k])\\
\nn			&&	+(I_m+T\Omega_2)s[k])+\gamma[k]\\
\nn			&=&-((1-\alpha)I_m+T\Omega_2)((I_m+T\Omega_2)s[k]\\
\nn			&&-((1-\alpha)I_m+T\Omega_2)\gamma[k]\\
			&&-((2-\alpha)I_m+T\Omega_2)(T\Omega_1\xi[k]+d_{12}[k]).
\eea

From (\ref{b-4}), (\ref{psi1}), and (\ref{gamma_dynamics}), the dynamics of the closed-loop system utilizing the control law (\ref{mu1b}) is described by the augmented system
\begin{equation}\label{aug}
   \psi_1[k+1]=A_{aug1}  \psi_1[k]+d_2[k],
\end{equation}
where the system matrix
\begin{equation}\label{A-aug1}
    A_{aug1}=\left[\begin{matrix}A_s&TN_1\\TN_2&A_{e1}\end{matrix}\right]
\end{equation}
and the sub-matrix
\be\label{As}
    A_s=I_{n-m}+TM\bar{A}Q=I_{n-m}+TA_c+T^2M\bar{\bar{A}}Q
\ee
with $A_c$ as given in (\ref{Ac}), and
\small
\begin{align}\label{Ae1}
A_{e1}=
\left[\bmx(I_m+T\Omega_2)&I_m\\
    -((1-\alpha)I_m+T\Omega_2)(I_m+T\Omega_2)&-((1-\alpha)I_m+T\Omega_2)\emx\right].
\end{align}
The (augmented) disturbance term in system (\ref{aug}) is
\begin{equation}\label{d-2}
    d_2[k]=\left[\begin{matrix}d_{11}[k]\\ d_{12}[k]\\-((2-\alpha)I_m+T\Omega_2)d_{12}[k]\end{matrix}\right],
\end{equation}
and the off-diagonal matrices in (\ref{A-aug1}) are
\bea
\nn N_1&=&[M\bar{A}R\quad M\bar{\bar{B}}(HC\bar{B})^{-1}],\\
\nn N_2&=&\left[\begin{matrix}\Omega_1\\-((2-\alpha)I_m+T\Omega_2)\Omega_1\end{matrix}\right].
\eea

Before demonstrating stability of the closed-loop system in the absence of disturbances, we need the following:
\begin{lemma}\label{eigAe1}
The eigenvalues of $A_{e1}$ are $\alpha$ and $0$.
\end{lemma}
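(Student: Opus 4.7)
The plan is to compute $\det(A_{e1}-\lambda I_{2m})$ in closed form and read off the eigenvalues. The key observation that makes this tractable is that all four $m\times m$ blocks of $A_{e1}$ are polynomials in the single matrix $\Omega_2$: the $(1,1)$ block is $I_m+T\Omega_2$, the $(1,2)$ block is $I_m$, the $(2,1)$ block is $-((1-\alpha)I_m+T\Omega_2)(I_m+T\Omega_2)$, and the $(2,2)$ block is $-((1-\alpha)I_m+T\Omega_2)$. Consequently these blocks pairwise commute, and in particular the lower-left block commutes with the lower-right block even after subtracting $\lambda I_m$ from the latter.

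With commutativity established, I would invoke the standard identity for the determinant of a $2\times 2$ block matrix whose lower blocks commute, namely
\[
\det\!\begin{bmatrix}P & Q\\ R & S\end{bmatrix}=\det(PS-QR),
\]
applied to $P=(I_m+T\Omega_2)-\lambda I_m$, $Q=I_m$, $R=-((1-\alpha)I_m+T\Omega_2)(I_m+T\Omega_2)$, $S=-((1-\alpha)I_m+T\Omega_2)-\lambda I_m$. This collapses the $2m\times 2m$ determinant to an $m\times m$ determinant of a polynomial expression in $\Omega_2$.

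Next I would expand that $m\times m$ expression. Writing $X\defi T\Omega_2$ for brevity, the expression becomes
\[
-((1-\lambda)I_m+X)\bigl((1-\alpha+\lambda)I_m+X\bigr)+((1-\alpha)I_m+X)(I_m+X),
\]
which I would expand out in the commutative ring generated by $X$. The quadratic-in-$X$ terms cancel, the linear-in-$X$ coefficients sum to $-(1-\lambda)-(1-\alpha+\lambda)+(1-\alpha)+1=0$, and the constant (in $X$) term simplifies to $\lambda(\lambda-\alpha)I_m$. Hence $\det(A_{e1}-\lambda I_{2m})=[\lambda(\lambda-\alpha)]^m$, which gives the eigenvalues $0$ and $\alpha$, each with algebraic multiplicity $m$.

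The main obstacle is simply the algebraic bookkeeping in the expansion; there is nothing conceptually delicate once the commuting-blocks structure has been identified. A minor point to justify explicitly is the applicability of the determinant identity, which only requires commutativity of the two lower blocks (and not invertibility), so it is available here without any genericity assumption on $T$ or $\alpha$. As an alternative verification I could, if desired, pass to a Schur triangularization of $\Omega_2$ and read off the eigenvalues from the resulting $2\times2$ problem parameterized by each eigenvalue of $\Omega_2$, but the block-determinant route is shorter.
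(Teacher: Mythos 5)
Your proof is correct and follows essentially the same route as the paper: both reduce $\det(A_{e1}-\lambda I_{2m})$ to the $m\times m$ determinant of the same quadratic expression in $T\Omega_2$ (the paper via a block-column swap and the resulting reduction, you via the commuting-lower-blocks identity $\det\bigl[\begin{smallmatrix}P&Q\\R&S\end{smallmatrix}\bigr]=\det(PS-QR)$), and both expansions collapse to $\det[\lambda(\lambda-\alpha)I_m]$, giving the eigenvalues $0$ and $\alpha$. Your explicit remark that only commutativity of the lower blocks (all polynomials in $\Omega_2$) is needed is a slightly cleaner justification of the same computation.
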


\proof
Using column operations,
\bea
\nn\lambda \{A_{e1}\}&=&\det\Big[{\begin{smallmatrix} \lambda I_m-(I_m+T\Omega_2)&-I_m\\ ((1-\alpha)I_m+T\Omega_2)(I_m+T\Omega_2)&\lambda I_m+((1-\alpha)I_m+T\Omega_2)\end{smallmatrix} \Big]}\\
\nn&=&(-1)^m\det\Big[{\begin{smallmatrix} -I_m&\lambda I_m-(I_m+T\Omega_2)\\ \lambda I_m+((1-\alpha)I_m+T\Omega_2)&((1-\alpha)I_m+T\Omega_2)(I_m+T\Omega_2)\end{smallmatrix} \Big]}\\
\nn &=&(-1)^m\det [((1-\alpha)I_m+T\Omega_2)(I_m+T\Omega_2)+(\lambda I_m\\
\nn&& \quad+((1-\alpha)I_m+T\Omega_2))(\lambda I_m-(I_m+T\Omega_2))]\\
&=& (-1)^m\det[ \lambda (\lambda-\alpha)I_m].
\eea
This proves the lemma.
\endproof

Since $0<\alpha<1$, the eigenvalues of $A_{e1}$ lie in the unit circle and we have the following theorem.

\begin{theorem}\label{th1}
Suppose Assumption \ref{as1} holds. In the absence of disturbances, under the discrete-time output feedback control law (\ref{mu1b}), the sampled-data system (\ref{b-4}) is asymptotically stable if the sampling period $T$ is small enough.
\end{theorem}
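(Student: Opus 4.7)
In the absence of disturbance, (\ref{aug}) reduces to $\psi_1[k+1] = A_{aug1}\psi_1[k]$, so asymptotic stability is equivalent to showing that the matrix $A_{aug1}$ in (\ref{A-aug1}) is Schur stable for sufficiently small $T$. I would exploit the $2\times 2$ block structure: the diagonal blocks $A_s$ and $A_{e1}$ are to be analyzed separately, and the off-diagonal blocks $TN_1$, $TN_2$ of order $O(T)$ are to be absorbed by a similarity transformation. First I locate the spectrum of each diagonal block; then I block-triangularize $A_{aug1}$ to carry that spectral information over to the full matrix.

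For the $A_{e1}$ block, Lemma \ref{eigAe1} already supplies the spectrum $\{0,\alpha\}$. Since $\alpha\in(0,1)$, these eigenvalues sit strictly inside the open unit disk at a distance at least $\min(\alpha,1-\alpha)$ that is \emph{independent of $T$}. For the $A_s$ block, I use the expansion (\ref{As}), $A_s = I_{n-m} + TA_c + T^2 M\bar{\bar{A}}Q$, together with Assumption \ref{as3}, which guarantees that the sliding-mode matrix $A_c$ is Hurwitz. Standard eigenvalue perturbation gives $\lambda_i(A_s) = 1 + T\lambda_i(A_c) + O(T^2)$, hence $|\lambda_i(A_s)|^2 = 1 + 2T\,\mathrm{Re}\,\lambda_i(A_c) + O(T^2)$, which is strictly less than $1$ for $T$ small enough.

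To handle the coupling I would apply the similarity $P = \bsmat I_{n-m} & -L \\ 0 & I_{2m}\esmat$, with $L = L(T)$ chosen to solve the Sylvester-type equation $A_s L - L A_{e1} = TN_1 - TLN_2L$. For small $T$ the spectra of $A_s$ (clustered near $1$) and of $A_{e1}$ (equal to $\{0,\alpha\}$) are disjoint, so the linear Sylvester operator $L\mapsto A_sL - LA_{e1}$ is boundedly invertible; a contraction argument on the $TLN_2L$ correction then yields a unique solution with $L = O(T)$. After this change of coordinates, $P^{-1}A_{aug1}P$ becomes block lower triangular with diagonal blocks $A_s + TLN_2 = A_s + O(T^2)$ and $A_{e1} - TN_2 L = A_{e1} + O(T^2)$, so the spectrum of $A_{aug1}$ is the union of the spectra of these two perturbed blocks.

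The main obstacle, which this Sylvester step is designed to defeat, is that the stability margin of $A_s$ shrinks at rate $O(T)$ while the raw off-diagonal coupling is also $O(T)$: a direct Gershgorin or continuity argument on the un-transformed $A_{aug1}$ would not close the gap. Because the triangularization reduces the perturbation of each diagonal block to $O(T^2)$, the eigenvalue displacement is dominated by both the $O(T)$ margin of $A_s$ and the fixed margin of $A_{e1}$. Consequently all eigenvalues of $A_{aug1}$ remain inside the open unit disk for $T$ sufficiently small, which delivers asymptotic stability of the closed loop.
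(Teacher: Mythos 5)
Your proposal is correct and follows essentially the same route as the paper: locate the spectrum of the two diagonal blocks ($A_{e1}$ via Lemma \ref{eigAe1}, $A_s=I_{n-m}+TA_c+O(T^2)$ with $A_c$ Hurwitz by Assumption \ref{as3}) and show the $O(T)$ coupling only perturbs each block's spectrum at order $O(T^2)$, so all eigenvalues of $A_{aug1}$ stay in the open unit disk for small $T$. Your explicit Sylvester-equation block triangularization is precisely the mechanism the paper delegates to the perturbation-theory citations (Kato 1995, Nguyen 2016) when it asserts $\lambda_1=\lambda\{A_s+O(T^2)\}$ and $\lambda_2=\lambda\{A_{e1}+O(T^2)\}$, so you have simply filled in the details of the same argument.
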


\proof
Using similar arguments to those in \citep{kato1995}, the eigenvalues of $A_{aug1}$ are
\bea
	\lambda_1&=&\lambda \{A_s+O(T^2)\}\\
	\lambda_2&=&\lambda \{A_{e1}+O(T^2)\}.
\eea
Since $A_c$ contains the stable eigenvalues associated with the zero dynamics of the original continuous-time sliding motion in (\ref{Ac}),  $\lambda\{A_s\}$ lie in the unit circle. Hence, according to \citep{Nguyen2016}, the eigenvalues of $A_{aug1}$  lie in the unit circle for sufficiently small $T$, which implies the stability of the closed-loop system.
\endproof

Next, the accuracy of the quasi-sliding motion of the system under the proposed control law (\ref{mu1b}) in the presence of the external disturbance will be studied.
Under the control law (\ref{mu1}),
\bea\label{sk2}
\nn	s[k+1]&=&\alpha s[k]+g[k]-g[k-1]\\
\nn &=&\alpha s[k]+T\Omega_1 (\xi[k]-\xi[k-1])\\
&&+d_{12}[k]-d_{12}[k-1].
\eea
Since $d_{12}[k]-d_{12}[k-1]= O(T^2)$ \citep{Nguyen2016},
\be
	s[k+1]=\alpha s[k]+T^2 M\bar{A}R s[k-1] +O(T^2)
\ee
At steady state, $s[k+1]\approx s[k]$ and
\be
	(1-\alpha-T^2 M\bar{A}R ) s[k] = O(T^2)
\ee
or
\be
	s[k] =(\beta I_m-T M\bar{A}R )^{-1} O(T)=O(T).
\ee

Similarly, at steady state, $\xi[k+1]\approx \xi[k]$ and from (\ref{b-4}),
\be
	\xi[k]= (TM\bar{A}Q)^{-1} O(T^2)=O(T).
\ee
Therefore,
\be
	x[k]=P_1^{-1}\left[ \begin{matrix}\xi[k]\\ s[k]\end{matrix}\right]=O(T).
\ee
The above analysis is summarized in the following theorem.

\begin{theorem}\label{th2}
Under Assumptions \ref{as1} and \ref{as3}, the sampled-data output feedback control law (\ref{mu1b}) produces a quasi-sliding motion about the sliding surface $s(t)$ with an $O(T)$ boundary layer and an ultimate bound of $O(T)$ on the original state variables. Furthermore, the control input is guaranteed to be $O(1)$ when the initial state variables are such that $s[0]=O(1)$.
\end{theorem}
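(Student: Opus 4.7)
The plan is to establish the three claims of the theorem in order: the $O(T)$ boundary layer on $s$, the $O(T)$ ultimate bound on the state, and the $O(1)$ bound on the control. I would work from the closed-loop sliding variable dynamics already derived in~(\ref{sk2}), and from the reduced-order dynamics of $\xi$ in~(\ref{b-4}), using Theorem~\ref{th1} as the stability foundation.

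First I would justify the $O(T)$ boundary layer on $s$. Starting from
\[
  s[k+1] = \alpha s[k] + T\Omega_1\bigl(\xi[k]-\xi[k-1]\bigr) + \bigl(d_{12}[k]-d_{12}[k-1]\bigr),
\]
the key step is to show $d_{12}[k]-d_{12}[k-1] = O(T^2)$. Using (\ref{d12}), this difference equals $THC\bar B(f[k]-f[k-1]) + HC(d'[k]-d'[k-1])$. Assumption~\ref{as1} gives $f[k]-f[k-1] = \int_{(k-1)T}^{kT}v(\beta)\,d\beta = O(T)$, so the first term is $O(T^2)$; the second is $O(T^2)$ directly from Lemma~\ref{l1} since $d'[k],d'[k-1]$ are each $O(T^2)$. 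Granting for the moment that $\xi[k]-\xi[k-1]=O(T^2)$ in steady state (to be verified below), the $s$-recursion becomes $s[k+1]=\alpha s[k]+O(T^2)$, and since $|\alpha|<1$ with $1-\alpha=\beta T$, the ultimate bound is $O(T^2)/(1-\alpha)=O(T^2)/(\beta T)=O(T)$.

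Next I would establish the $O(T)$ bound on $\xi$ and close the bootstrap. From (\ref{b-4}),
\[
  \xi[k+1] = \bigl(I_{n-m}+TM\bar A Q\bigr)\xi[k] + TM\bar A R\, s[k] + d_{11}[k],
\]
with $d_{11}[k]=O(T^2)$ by (\ref{d11}). At steady state, $(I-A_s)\xi[k] = TM\bar A R\, s[k]+d_{11}[k]$. Since $I-A_s = -TA_c + O(T^2)$ and $A_c$ is Hurwitz by Assumption~\ref{as3}, $(I-A_s)^{-1}=O(1/T)$. With the right-hand side equal to $O(T)\cdot O(T)+O(T^2)=O(T^2)$, this gives $\xi[k]=O(T)$, and consequently $\xi[k]-\xi[k-1]=O(T)\cdot(1-\rho)=O(T^2)$ closing the earlier assumption. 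The ultimate bound on $x$ then follows from $x[k] = P_1^{-1}[\xi[k]^T\;s[k]^T]^T = O(T)$. The main obstacle here is the rigorous justification of the bootstrap between the $\xi$ and $s$ bounds, because a naive application of a contractive ultimate bound to the full augmented system~(\ref{aug}) is confounded by the fact that the spectral radius of $A_{aug1}$ is only $1-O(T)$; the cleanest fix is to exploit the block triangular-plus-$O(T)$ structure so that the $s$-part is contracted at rate $\alpha$ (independent of $T$) and the $\xi$-part is analysed via the Lyapunov equation for $A_c$ lifted through $A_s=I+TA_c+O(T^2)$.

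Finally, for the control bound I would use the closed-form expression (\ref{mu1c}),
\[
  u[k] = -(HC\bar B)^{-1}\!\left((\beta I_m+\Omega_2)s[k] + \tfrac{1}{T}g[k-1]\right).
\]
Since $g[k-1]/T = \Omega_1\xi[k-1] + d_{12}[k-1]/T$ and $d_{12}[k-1]/T = HC\bar B f[k-1] + HC d'[k-1]/T = O(1)$ by (\ref{d12}) and Assumption~\ref{as1}, the term $g[k-1]/T$ is bounded whenever $\xi[k-1]$ is. By Theorem~\ref{th1} the closed loop is asymptotically stable with bounded forcing, so the initial condition $\xi[0]=O(1)$, $s[0]=O(1)$ guarantees $\xi[k],s[k]=O(1)$ uniformly in $k$, hence $u[k]=O(1)$ throughout the transient, as claimed.
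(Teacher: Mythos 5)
Your proposal is correct and follows essentially the same route as the paper: the same recursion (\ref{sk2}) together with $d_{12}[k]-d_{12}[k-1]=O(T^2)$, the same steady-state reasoning yielding $s[k]=O(T)$ and $\xi[k]=O(T)$ (hence $x[k]=P_1^{-1}[\xi^T\;s^T]^T=O(T)$), and the same closed-form expression (\ref{mu1c}) for the $O(1)$ control bound; your extra attention to the bootstrap and to the $1-O(T)$ spectral radius of $A_{aug1}$ merely makes explicit what the paper treats informally. The only small slip is that your $\xi$-equation omits the input term $TM\bar{B}u[k]$ present in (\ref{b-4}), which is harmless because $M\bar{B}=O(T)$ and $u[k]=O(1)$ render it $O(T^2)$, exactly as the paper implicitly absorbs it.
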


\begin{remark}
The proposed control contains no switching actions, thereby avoiding chattering phenomena. On the other hand, it is observed that control law (\ref{mu1b}) is not able to completely compensate disturbance $g(k)$. However, by taking into account the past information, control law (\ref{mu1b}) still provides the closed-loop system with certain characteristics to reduce the influence of external disturbances.
\end{remark}

\subsection{Development of a Modified Version of Method 2}
\label{c2}

In this subsection, we consider the case when Assumption \ref{as2} holds. We have
\be
	f[k+1]=f[k]+v[k]T+f^\prime[k],
\ee
where
\be
	f^\prime[k]=\int_{kT}^{(k+1)T}\int_{kT}^\beta \dot{v}(\sigma)d\sigma d\beta.
\ee
Since the second derivative of $f(t)$ is bounded, assume that $\|\dot{v}\|\leq W=O(1)$. Hence,
\be
	\|\int_{kT}^{(k+1)T}\int_{kT}^\beta \dot{v}(\sigma)d\sigma d\beta\|\leq T(\beta-kT)W=O(T^2)
\ee
for $kT\leq \beta\leq (k+1)T$. Thus, $f^\prime[k]=O(T^2)$. We have
\bea
\nn	&&\|f[k]-2f[k-1]+f[k-2]\|\\
\nn &=&\|T(v[k-1]-v[k-2])+f^\prime[k-1]-f^\prime[k-2]\|\\
\nn				&=&\|T\int_{(k-2)T}^{(k-1)T} \dot{v}(\sigma)d\sigma\|\\
				&\leq&T^2W=O(T^2).
\eea
Therefore, $f[k]$ can be approximated by $2f[k-1]-f[k-2]$. Due to the expressions in (\ref{d12}) and (\ref{gk}), $g[k]$ can be approximated by $2g[k-1]-g[k-2]$. This approximation was also employed in the control law presented in \citep{Nguyen2016}. The expression in (\ref{d12}) also implies that
\be\label{d12_TO2}
	d_{12}[k]-2_{12}[k-1]+_{12}[k-2]=O(T^3).	
\ee

In the equivalent control law (\ref{ueq2}), replacing $g[k]$ by $2g[k-1]-g[k-2]$ yields
\bea\label{mu2}
\nn u[k]&=&-\frac{1}{T}(HC\bar{B})^{-1}(((1-\alpha)I_m+T\Omega_2)s[k]\\
		&&+2g[k-1]-g[k-2]).
\eea
Using (\ref{d-1b}), the control law in (\ref{mu2}) is
\bea\label{mu2b}
    \nn u[k]&=&-\frac{1}{T}(HC\bar{B})^{-1}(((3-\alpha)I_m+T\Omega_2)s[k]\\
\nn&&   -(3I_m+2T\Omega_2)s[k-1]+(I_m+T\Omega_2)s[k-2])\\
&&+2u[k-1]-u[k-2].
\eea
Using the same argument as in Subsection \ref{c1}, from (\ref{mu2}), we obtain
\bea\label{mu2c}
\nn u[k]&=&-(HC\bar{B})^{-1}((\beta I_m+\Omega_2)s[k]+\frac{2g[k-1]-g[k-2]}{T})\\
		&=&O(1).
\eea
As in Subsection \ref{c1}, we employ the following
\bea\label{ss-3}
    s_1[k]&=&s[k-1],\\
\label{u-1b}
    \gamma[k]&=&T HC\bar{B}u[k],\\
\label{u-2}
    \gamma_1[k]&=&T HC\bar{B}u[k-1].
\eea
Let
\be
    \psi_2=\left[\begin{matrix}\xi[k]\\s[k]\\s_1[k]\\\gamma[k]\\\gamma_1[k]\end{matrix}\right]
\ee
then the dynamics of the extended system is
\begin{equation}\label{aug2}
   \psi_2[k+1]=A_{aug2}   \psi_2[k]+d_3[k],
\end{equation}
where
\begin{equation}\label{A-aug}
    A_{aug2}=\left[\begin{matrix}A_s&TN_3\\TN_4&A_{e2}\end{matrix}\right],
\end{equation}
the sub-matrix $A_s$ is given in (\ref{As}), and
\be\label{Ae2}
A_{e2}=
\left[\begin{smallmatrix}(I_m+T\Omega_2)&0&I_m&0\\I_m&0&0&0\\
    -(-\alpha I_m+(2-\alpha)T\Omega_2+T^2\Omega_2^2)&-(I_m+T\Omega_2)&-((1-\alpha)I_m+T\Omega_2)&-I_m\\0&0&I_m&0\end{smallmatrix}\right].
\ee
The (augmented) disturbance term in system (\ref{aug2}) is
\be\label{d-3}
 \nn   d_3[k]=\left[\begin{matrix}d^T_{11}[k]& d^T_{12}[k]&0&-d^T_{12}[k]((3-\alpha)I_m+T\Omega_2)^T&0\end{matrix}\right]^T,
\ee
and the off-diagonal matrices in (\ref{A-aug}) are
\bea
\nn    N_3&=&[M\bar{A}R\quad 0_{(n-m)\times m}\quad M\bar{\bar{B}}(HC\bar{B})^{-1}\quad 0_{(n-m)\times m}],\\
\nn    N_4&=&\left[\begin{matrix}\Omega_1\\0\\-((3-\alpha)I_m+T\Omega_2)\Omega_1\\0\end{matrix}\right].
\eea

As argued in Subsection \ref{c1}, the following results are obtained.

\begin{lemma}\label{eigAe2}
The eigenvalues of $A_{e2}$ are $\alpha$ and $0$.
\end{lemma}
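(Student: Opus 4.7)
The plan is to compute the characteristic polynomial of $A_{e2}$ by a block determinant / eigenvector argument that parallels the one used for Lemma \ref{eigAe1}, exploiting the very simple structure of its second and fourth block rows. In $\lambda I_{4m}-A_{e2}$ these two rows read $[-I_m,\ \lambda I_m,\ 0,\ 0]$ and $[0,\ 0,\ -I_m,\ \lambda I_m]$ respectively, which simply encode that the auxiliary states $s_1$ and $\gamma_1$ are pure one-step delays of $s$ and $\gamma$. This means two of the four block rows carry no real dynamics, and they can be used to drive the determinant through an effective $2m\times 2m$ problem in the $(s,\gamma)$ block coordinates.

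Concretely, I would proceed via eigenvector analysis because it makes the cancellation transparent. Write a candidate eigenvector as $v=[v_1^\top,v_2^\top,v_3^\top,v_4^\top]^\top$ for eigenvalue $\lambda$. Rows 2 and 4 of the eigen-equation give
\begin{equation*}
v_1=\lambda v_2,\qquad v_3=\lambda v_4,
\end{equation*}
and row 1 gives $v_3=(\lambda I_m-(I_m+T\Omega_2))v_1$. Substituting these three relations into row 3 reduces everything to a single polynomial identity in $\lambda$, with $m\times m$ matrix coefficients built from $T\Omega_2$, acting on $v_1$.

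The key step is verifying that all the $T\Omega_2$ and $T^2\Omega_2^2$ cross terms in row 3 cancel by design — exactly the same phenomenon that drove Lemma \ref{eigAe1}, where the coupling block was engineered so that the characteristic equation collapsed to $\det[\lambda(\lambda-\alpha)I_m]$. Expanding the product $(\lambda I_m+(1-\alpha)I_m+T\Omega_2)(\lambda I_m-(I_m+T\Omega_2))$ and matching it against the entry $\alpha I_m-(2-\alpha)T\Omega_2-T^2\Omega_2^2$ of $A_{e2}[3,1]$, I expect the $T\Omega_2$-dependence to evaporate entirely, leaving the scalar-matrix identity $\lambda^{2}(\lambda-\alpha)\,v_1=0$. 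Hence the only possible nonzero eigenvalue is $\alpha$.

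Finally, a separate and easy analysis of the $\lambda=0$ case — starting from row 2, which forces $v_1=0$, and then row 1, which forces $v_3=0$, and feeding this back into row 3 — shows that $0$ is indeed an eigenvalue and that its eigenspace/generalized eigenspace accounts for the remaining $3m$ dimensions. The main obstacle is the sheer bookkeeping of the $T\Omega_2$ and $T^{2}\Omega_2^{2}$ terms when expanding row 3: once that cancellation is carried out carefully, the conclusion $\lambda\{A_{e2}\}=\{\alpha,0\}$ follows immediately by the same argument as in Lemma \ref{eigAe1}.
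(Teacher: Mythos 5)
Your proposal is correct, but it takes a genuinely different route from the paper. The paper proves Lemma \ref{eigAe2} by computing the characteristic polynomial: it forms $X=\lambda I_{4m}-A_{e2}$ and applies a chain of block elimination matrices $J_1,\dots,J_4$ to reach a block-triangular form, concluding $\det X=\lambda^{3m}(\lambda-\alpha)^m$. You instead run an eigenvector analysis: rows 2 and 4 of $A_{e2}v=\lambda v$ give $v_1=\lambda v_2$, $v_3=\lambda v_4$, row 1 gives $v_3=((\lambda-1)I_m-T\Omega_2)v_1$, and (for $\lambda\neq 0$) substituting into row 3 makes every $T\Omega_2$ and $T^2\Omega_2^2$ term cancel — I checked this: writing $W=T\Omega_2$, the left side contributes $(\alpha-\lambda^{-1})I_m+(\alpha-2-\lambda^{-1})W-W^2$ and the right side $(\lambda^2-\alpha\lambda+\alpha-\lambda^{-1})I_m+(\alpha-2-\lambda^{-1})W-W^2$, so row 3 collapses to $\lambda(\lambda-\alpha)v_1=0$, exactly your claimed scalar identity up to a factor of $\lambda$; since $v_1=0$ would force $v=0$, any nonzero eigenvalue must be $\alpha$, and your $\lambda=0$ analysis ($v_1=v_3=0$, $v_4=-(I_m+T\Omega_2)v_2$) correctly exhibits a nontrivial kernel. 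Two small remarks. First, to get the full statement $\lambda\{A_{e2}\}=\{\alpha,0\}$ you should note explicitly that $\lambda=\alpha$ is attained: for $\lambda=\alpha$ row 3 is satisfied identically, so any $v_1\neq 0$ generates an eigenvector via the same substitutions. Second, your closing remark that the zero (generalized) eigenspace "accounts for the remaining $3m$ dimensions" is not established by the eigenvector argument (the genuine eigenspace of $0$ has dimension only $m$); the multiplicity count is what the paper's determinant computation delivers, but it is not needed either for the lemma as stated or for its use in Theorem \ref{th1b}, which only requires the spectrum to lie strictly inside the unit circle. On balance, your route is more elementary — no block-determinant bookkeeping, no implicit reliance on commutativity of the blocks — while the paper's route additionally yields the algebraic multiplicities; both hinge on the same engineered cancellation that already drove Lemma \ref{eigAe1}.
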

\proof
Let
\be
	X=\left[\begin{smallmatrix}(\lambda-1)I_m-T\Omega_2&0&-I_m&0\\-I_m&\lambda I_m&0&0\\
    (-\alpha I_m+(2-\alpha)T\Omega_2+T^2\Omega_2^2)&(I_m+T\Omega_2)&(\lambda+1-\alpha)I_m+T\Omega_2&I_m\\0&0&-I_m&\lambda I_m\end{smallmatrix}\right]
\ee
Then
\be
	\lambda \{A_{e2}\}=\det X.
\ee
Let
\bea
\nn X_{11}&=&(\lambda-1)I_m-T\Omega_2,\\
\nn X_{31}&=&-\alpha I_m+(2-\alpha)T\Omega_2+T^2\Omega_2^2,\\
\nn X_{32}&=&I_m+T\Omega_2,\\
\nn X_{33}&=&(\lambda+1-\alpha)I_m+T\Omega_2.
\eea
Furthermore, let
\be
	J_1= \left[\begin{matrix}I_m&0&0&0\\I_m&X_{11}&0&0\\
    0&0&I_m&0\\0&0&0&I_m\end{matrix}\right],
\ee
\be
	J_2= \left[\begin{matrix}I_m&0&0&0\\0&I_m&0&0\\
    -X_{31}&0&X_{11}&0\\0&0&0&I_m\end{matrix}\right],
\ee
\be
	J_3= \left[\begin{matrix}I_m&0&0&0\\0&I_m&0&0\\
    0&-X_{11}X_{32}&\lambda X_{11}&0\\0&0&0&I_m\end{matrix}\right],
\ee
\be
	J_4= \left[\begin{matrix}I_m&0&0&0\\0&I_m&0&0\\
    0&0&\lambda X_{11}&0\\0&0&I_m&Y_{33}\end{matrix}\right],
\ee
where
\be
	Y_{33}=X_{11}X_{32}+\lambda X_{11}(X_{11} X_{33}+X_{31}).
\ee
Then we have
\be
	J_4J_3J_2J_1X=\left[\begin{smallmatrix}X_{11}&0&-I_m&0\\0&\lambda X_{11}&-I_m&0\\
   0&0&Y_{33}&\lambda X_{11}^2\\0&0&0&\lambda Y_{33}+\lambda X_{11}^2\end{smallmatrix}\right].
\ee
Thus,
\be
	\det(J_4J_3J_2J_1X)=\det(X_{11})\det(\lambda X_{11})\det(Y_{33})\det(\lambda Y_{33}+\lambda X_{11}^2).
\ee
Since
\be
	\det(\lambda Y_{33}+\lambda X_{11}^2)=\det(\lambda^3(\lambda-\alpha)X_{11}),
\ee
\be
	\det(X)=\lambda^{3m}(\lambda-\alpha)^m.
\ee
This proves the lemma.
\endproof

\begin{theorem}\label{th1b}
Suppose Assumption \ref{as2} holds. In the absence of disturbances, under the discrete-time output feedback control law (\ref{mu2b}), the sampled-data system (\ref{b-4}) is asymptotically stable if the sampling period $T$ is small enough.
\end{theorem}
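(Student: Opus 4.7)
The plan is to mirror the proof of Theorem \ref{th1}, replacing $A_{aug1}$ by $A_{aug2}$ and invoking Lemma \ref{eigAe2} in place of Lemma \ref{eigAe1}. Setting $d_3[k]=0$ in (\ref{aug2}), the closed-loop dynamics reduce to the autonomous linear system $\psi_2[k+1]=A_{aug2}\,\psi_2[k]$, so asymptotic stability is equivalent to showing that every eigenvalue of $A_{aug2}$ lies strictly inside the unit disk for sufficiently small $T$.

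The key observation is the block structure in (\ref{A-aug}): the diagonal blocks are $A_s$ and $A_{e2}$, while the off-diagonal couplings $TN_3$ and $TN_4$ are $O(T)$. This places $A_{aug2}$ in exactly the perturbation form used for $A_{aug1}$ in Theorem \ref{th1}. First I would cite the spectral perturbation argument of \citep{kato1995} (as applied in \citep{Nguyen2016}) to conclude that the eigenvalues of $A_{aug2}$ split into two groups,
\begin{eqnarray}
\nn \lambda_1 &=& \lambda\{A_s + O(T^2)\},\\
\nn \lambda_2 &=& \lambda\{A_{e2}+ O(T^2)\},
\end{eqnarray}
since the product of the two $O(T)$ off-diagonal blocks contributes only an $O(T^2)$ perturbation to each diagonal block.

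Next I would handle the two groups separately. For $\lambda_1$, the identity (\ref{As}) gives $A_s = I_{n-m}+TA_c+T^2 M\bar{\bar{A}}Q$; because $A_c$ carries the stable continuous-time zero dynamics guaranteed by Assumption \ref{as3}, $I_{n-m}+TA_c$ has eigenvalues strictly inside the unit circle for small $T$, and the $O(T^2)$ terms (including the perturbation above) do not push them out. For $\lambda_2$, Lemma \ref{eigAe2} gives the unperturbed spectrum of $A_{e2}$ as $\{\alpha,0\}$ (with appropriate multiplicities), both of which satisfy $|\cdot|<1$ since $\alpha\in(0,1)$; the $O(T^2)$ perturbation again preserves this property for $T$ small enough.

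Combining both pieces, $\rho(A_{aug2})<1$ for all sufficiently small $T>0$, which establishes asymptotic stability of the disturbance-free closed-loop system. The main (really only) subtlety is the spectral perturbation bound, but since the proof of Theorem \ref{th1} has already invoked it in a structurally identical situation, citing \citep{kato1995,Nguyen2016} is sufficient and no new calculation is needed.
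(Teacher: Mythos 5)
Your proposal is correct and follows essentially the same route as the paper: the paper's proof likewise splits the spectrum of $A_{aug2}$ into $\lambda\{A_s+O(T^2)\}$ and $\lambda\{A_{e2}+O(T^2)\}$ via the perturbation result in \citep{kato1995}, invokes Lemma \ref{eigAe2} for the spectrum $\{\alpha,0\}$ of $A_{e2}$, and uses the stability of $A_c$ (hence of $A_s$) exactly as you do. Your added detail on why $I_{n-m}+TA_c+O(T^2)$ stays inside the unit circle is a slight elaboration of what the paper asserts, not a different argument.
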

\proof
Using the results in perturbation theory for linear operators in \citep{kato1995}, the eigenvalues of $A_{aug2}$ are
\bea
	\lambda_1&=&\lambda \{A_s+O(T^2)\}\\
	\lambda_2&=&\lambda \{A_{e2}+O(T^2)\}.
\eea
According to Lemma \ref{eigAe2}, the eigenvalues of $A_{e2}$ are $\alpha$ and 0, which lie in the unit circle. In addition, the eigenvalues of $A_s$ also are in the unit circle. Hence, similar to the proof of Theorem \ref{th1}, for sufficiently small $T$, the eigenvalues of $A_{aug2}$  lie in the unit circle, which implies the stability of the closed-loop system.
\endproof

\begin{theorem}\label{th2b}
Suppose Assumptions \ref{as2} and \ref{as3} hold. In the presence of the external disturbance, the sampled-data output feedback control law (\ref{mu2b}) produces a quasi-sliding motion on the sliding surface $s(t)$ with an $O(T^2)$ boundary layer and an ultimate bound of $O(T)$ on the original state variables. Furthermore, the control input is guaranteed to be $O(1)$ if the initial state variables are such that $s[0]=O(1)$.
\end{theorem}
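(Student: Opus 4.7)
The plan is to parallel the argument of Theorem \ref{th2}, exploiting the higher-order accuracy of the two-step predictor $2g[k-1]-g[k-2]$ in place of $g[k-1]$. First I would substitute the control law (\ref{mu2}) into the $s$-dynamics (\ref{ss}) to obtain $s[k+1] = \alpha s[k] + (g[k]-2g[k-1]+g[k-2])$. Writing $g[k]=T\Omega_1\xi[k]+d_{12}[k]$ from (\ref{gk}) and invoking the second-difference estimate $d_{12}[k]-2d_{12}[k-1]+d_{12}[k-2]=O(T^3)$ from (\ref{d12_TO2}) (which follows from Lemma \ref{l2} together with (\ref{d12}) under Assumption \ref{as2}), this reduces to
\[
s[k+1] = \alpha s[k] + T\Omega_1\bigl(\xi[k]-2\xi[k-1]+\xi[k-2]\bigr) + O(T^3).
\]

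Second, closed-loop stability is provided by Theorem \ref{th1b}, which guarantees that the augmented system (\ref{aug2}) is Schur stable for sufficiently small $T$, so a bounded quasi-steady regime is reached. In that regime, setting $\xi[k+1]\approx\xi[k]$ in the first row of (\ref{b-4}) and using that $M\bar{A}Q=A_c+O(T)$ is invertible (since $A_c$ is Hurwitz by Assumption \ref{as3}) together with $d_{11}[k]=O(T^2)$ gives $\xi[k]=O(\|s[k]\|)+O(T)$. A one-step bootstrap through the $\xi$-recursion $\xi[k+1]=A_s\xi[k]+TM\bar{A}R\,s[k]+d_{11}[k]$, with $A_s=I_{n-m}+TA_c+O(T^2)$ contractive for small $T$, refines consecutive increments to $\xi[k]-\xi[k-1]=O(T\|s[k]\|+T^2)$, and hence the backward second difference $\xi[k]-2\xi[k-1]+\xi[k-2]$ is of the same order.

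Third, combining both pieces and using $1-\alpha=\beta T$ with $\beta=O(1)$ from (\ref{beta}), the balance $s[k+1]\approx s[k]$ yields $\beta T\,s[k] = T\Omega_1\cdot O(T\|s[k]\|+T^2)+O(T^3)$; after absorbing the small self-term on the right one obtains $s[k]=O(T^2)$. Back-substituting into the $\xi$-balance produces $\xi[k]=O(T)$, and therefore $x[k]=P_1^{-1}[\xi[k]^T\ s[k]^T]^T=O(T)$, which establishes the $O(T^2)$ quasi-sliding boundary layer and the $O(T)$ ultimate bound on the original state. The control-effort claim is then read off (\ref{mu2c}): when $s[0]=O(1)$, during the transient $\xi$ and $s$ remain $O(1)$, so $g[k-1],\,g[k-2]=O(T)$ by (\ref{gk}) and (\ref{d12}), making both $(\beta I_m+\Omega_2)s[k]$ and $(2g[k-1]-g[k-2])/T$ of order $O(1)$; hence $u[k]=O(1)$.

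The principal technical obstacle is the sharpened estimate $\xi[k]-2\xi[k-1]+\xi[k-2]=O(T\|s[k]\|+T^2)$. Naively inserting the pointwise bound $\xi[k]=O(T)$ would only deliver $O(T)$ for this second difference, which is insufficient to recover the $O(T^2)$ sliding accuracy; the improvement requires exploiting the contractivity of $A_s$ together with the smallness of both $s[k]$ and $d_{11}[k]$, iterated once to close the bootstrap. This refinement is precisely the mechanism by which the tighter $O(T^3)$ approximation of $g[k]$ by $2g[k-1]-g[k-2]$ translates into an extra factor of $T$ in the quasi-sliding bound relative to Method 1, while the rewrite of $u[k]$ in (\ref{mu2c}) shows that the rescaling $1-\alpha=\beta T$ is what prevents the amplification of $g[k-1],g[k-2]$ by $1/T$ that plagued the original Method 2.
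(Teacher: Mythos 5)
Your proposal is correct and follows essentially the same route as the paper's proof: substitute (\ref{mu2}) into (\ref{ss}), use the second-difference estimate (\ref{d12_TO2}), balance $s[k+1]\approx s[k]$ with $1-\alpha=\beta T$ to obtain $s[k]=O(T^2)$, then $\xi[k]=O(T)$ and $x[k]=P_1^{-1}[\xi[k]^T\ s[k]^T]^T=O(T)$, and read the $O(1)$ control bound off (\ref{mu2c}). The only departure is that you justify the negligibility of $T\Omega_1(\xi[k]-2\xi[k-1]+\xi[k-2])$ via the bootstrap estimate $\xi[k]-\xi[k-1]=O(T\|s[k]\|+T^2)$, whereas the paper simply sets $\xi[k+1]\approx\xi[k]$ at steady state; this is a welcome tightening of the same argument rather than a different approach.
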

\proof
Under the control law (\ref{mu2}),
\bea\label{sk2b}
\nn	s[k+1]&=&\alpha s[k]+g[k]-2g[k-1]+g[k-2]\\
\nn &=&\alpha s[k]+T\Omega_1 (\xi[k]-2\xi[k-1]+\xi[k-2])\\
&&+d_{12}[k]-2d_{12}[k-1]+d_{12}[k-2].
\eea
Due to (\ref{d12_TO2}) and at steady state, $\xi[k+1]\approx \xi[k]$ , $s[k+1]\approx s[k]$, we obtain
\be
	s[k+1]=\alpha s[k]+O(T^3).
\ee
This implies
\be
	s[k] = \frac{1}{1-\alpha}O(T^3)=O(T^2)
\ee
since $1-\alpha=O(T)$.
Similarly, at steady state, $\xi[k+1]\approx \xi[k]$ and from (\ref{b-4}),
\be
	\xi[k]= (TM\bar{A}Q)^{-1} O(T^2)=O(T).
\ee
Therefore,
\be
	x[k]=P_1^{-1}\left[ \begin{matrix}\xi[k]\\ s[k]\end{matrix}\right]=O(T).
\ee
\endproof

\begin{remark}
The quasi-sliding motions under the control laws (\ref{mu1}), (\ref{mu2b}) possess less accuracy than the ones in \citep{nguyen09b,Nguyen2016}, which are $O(T^2)$ and $O(T^3)$ respectively. (However, the bounds on the state variables are similar). The advantage of the proposed schemes is that the resulting control efforts are able to operate at a less demanding mode than their original counterparts in \citep{nguyen09b,Nguyen2016}.
\end{remark}

\begin{remark}
When $s[k]=O(1)$, the control laws in (\ref{mu1b}) and (\ref{mu2b}) are $O(1)$ as explained in (\ref{mu1c}) and (\ref{mu2c}) respectively. In contrast, the methods in \citep{nguyen09b,Nguyen2016} provide $O(1/T)$ control effort, which results in undesired high gain control. This shows the advantage of the proposed method in this paper.
\end{remark}

\section{SIMULATION}
In this section, for comparison we employ the same system as in \citep{Nguyen2016}, which is the lateral dynamics of an aircraft \cite{Srina78}. Its system matrices are given as
\bea
\nn A&=&\left[\bmx-3.79& 0.04 &-52 &0\\-0.14& -0.36 &4.24 &0\\0.06& -1 &-0.27 &0.05\\1 &0.06 &0 &0 \emx\right], \\
\nn B&=&\left[\bmx25& 9.83\\1.42& -4.2\\0.01 &0.05\\0& 0\emx\right],\\
 \nn C&=&\left[\bmx1&0&0&0\\0&1&0&0\\0&0&0&1\emx\right].
\eea
The invariant zero of the system is -0.1796. The matrix $H$ is
\be
\nn H=\left[\begin{matrix}0.035306& 0.082634 &0.076550\\0.011937& -0.210157 &  0.008324\end{matrix}\right],
\ee
which is constructed such that the assignable eigenvalue for the sliding mode is $-2$, \citep{Nguyen2016}. The initial condition is given by $x(0)=[-1,2,1,-2]^T$. The sampling period is $T=0.01s$. The disturbance vector is defined as
\be
 f(t)=\begin{cases}
\left[\begin{matrix}0\\0\end{matrix}\right], & \text{for } t\geq 0\\
\left[\begin{matrix}2\\-0.5\end{matrix}\right], & \text{for } 10\leq t<5\pi\\
\left[\begin{matrix}1+\sin(0.5t)\\0.5\cos(t)\end{matrix}\right],& \text{for } t\geq5\pi.
\end{cases}
\ee
This shows that the disturbance vector affects the system dynamics from $t=10s$ onwards. At $t=5\pi$, the second derivative of the disturbance does not exist. The parameters of the controllers are $\beta=3$ and $\alpha=0.97$.

Since the control methods use information from previous time instants, the control signals from M1 and modified M1 (MM1) are activated only from time step 2 onwards while those of M2 and modified M2 (MM2) start working from time step 3. We conducted two experiments: a noise free one and a noisy case. (In the conference version of the paper \cite{nguyen2017improvement}, only the noise free case was considered).

Figs. \ref{figu1} and \ref{figu2} reveal that the modified versions produce control signals of less magnitude than the ones using the original control methods. Specifically, the largest magnitudes during the transient of the control law using M1 and M2 are about 23 and 25 respectively; meanwhile, the control effort using method MM1 and MM2 generates signals whose values are about 2. This suggests that the proposed schemes are able to maintain the control signals at low gain levels. At $t=10s$ and $t=5\pi s$, the control signal using method MM1 exhibits less fluctuation than that using method MM2. This suggests that method MM1 is less sensitive to distubances than method MM2. In Figs. \ref{figx1} and \ref{figx2}, the evolution of the state variables using M1 and M2 is slightly better than that using MM1 and MM2. The sliding functions are presented in Figs. \ref{figs1} and \ref{figs2} showing that M1 and M2 perform better than their counterparts. These numerical results illustrate our theoretical analysis.

A noise profile is added to the outputs of the system in the form of a uniformly distributed random signal, whose range lies in the interval $[-0.005,0.005]$. It is shown in Figs. \ref{figu1n}, \ref{figu2n} that MM1 and MM2 generate less control effort than M1 and M2. The largest magnitudes of the control signals using M1 and M2 are again about 24 and 23 respectively. In contrast, the magnitudes of the control efforts using MM1 and MM2 are about 3 and 4 respectively. Figs. \ref{figx1n}, \ref{figx2n}, \ref{figs1n}, \ref{figs2n} show that the evolution of the state variables and sliding functions using MM1 and MM2 is much better than their counterparts. MM1 performs best in this scenario as its control signals are less sensitive to noise than the others. It should also be observed that the performance of MM2 and M1 is comparable.

In both cases, the numerical simulations reveal that the proposed schemes are effective in avoiding high gain control efforts. In the absence of noise, M1 and M2 perform better than MM1 and MM2, but in contrast, in the presence of noise, MM1 and MM2 outperform their counterparts.

\begin{figure}[!htb]
      \centering
      \includegraphics[height=2.5in]{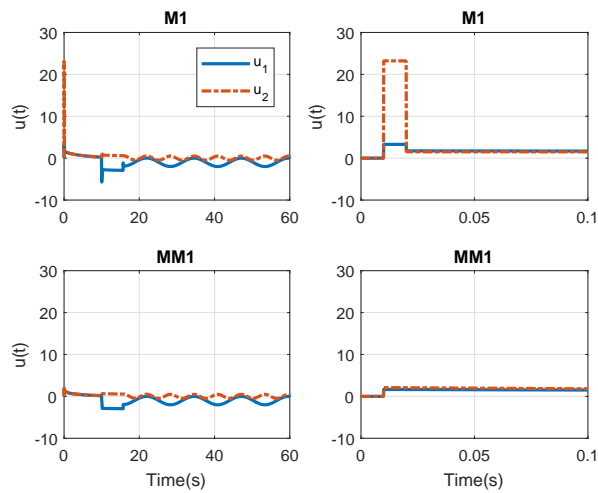}
       \caption{The evolution of the control signals using M1 and MM1 in the noise-free case}
      \label{figu1}
\end{figure}

\begin{figure}[!htb]
      \centering
      \includegraphics[height=2.5in]{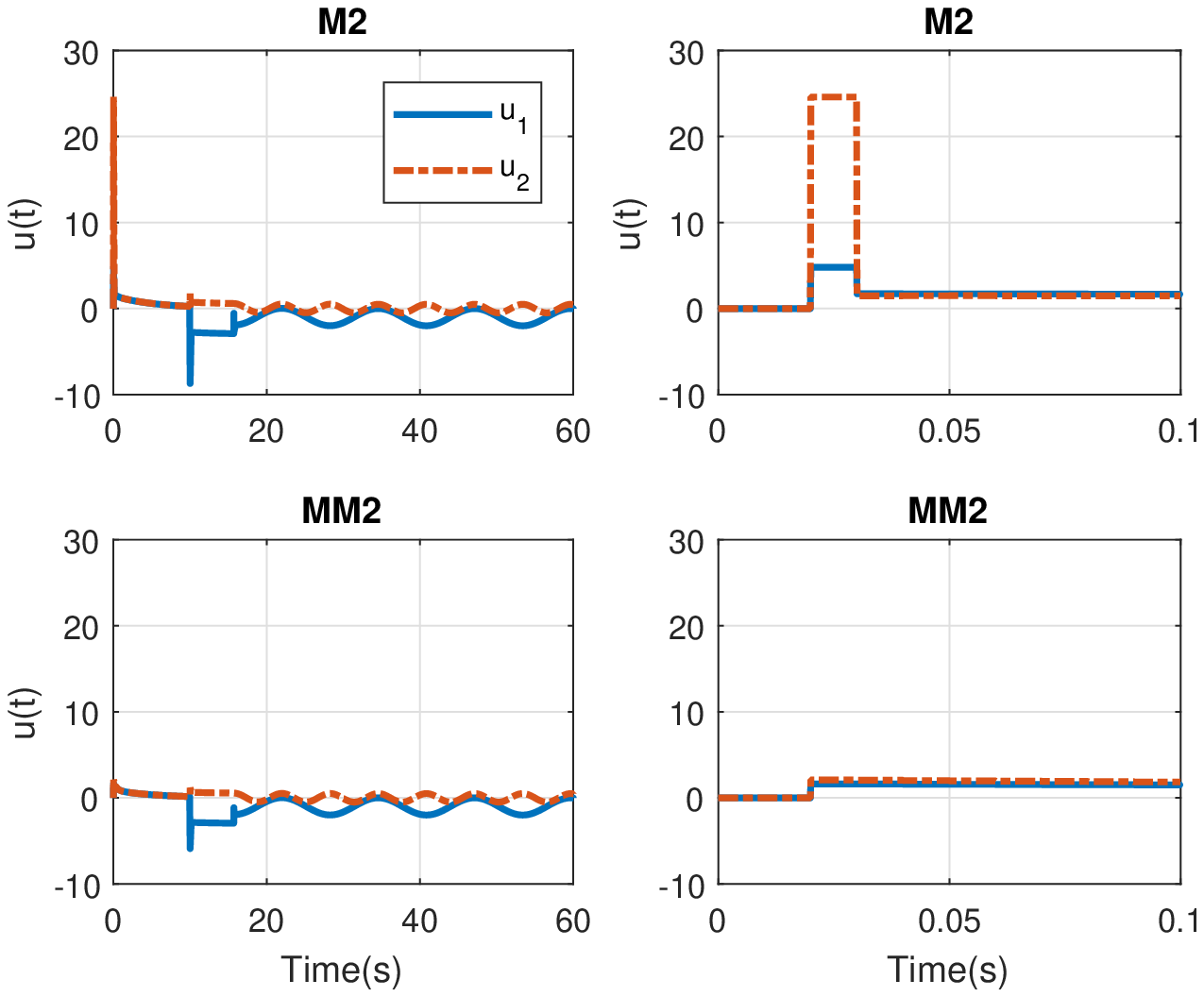}
      \caption{The evolution of the control signals using M2 and MM2 in the noise-free case}
      \label{figu2}
\end{figure}

\begin{figure}[!htb]
      \centering
      \includegraphics[height=2.5in]{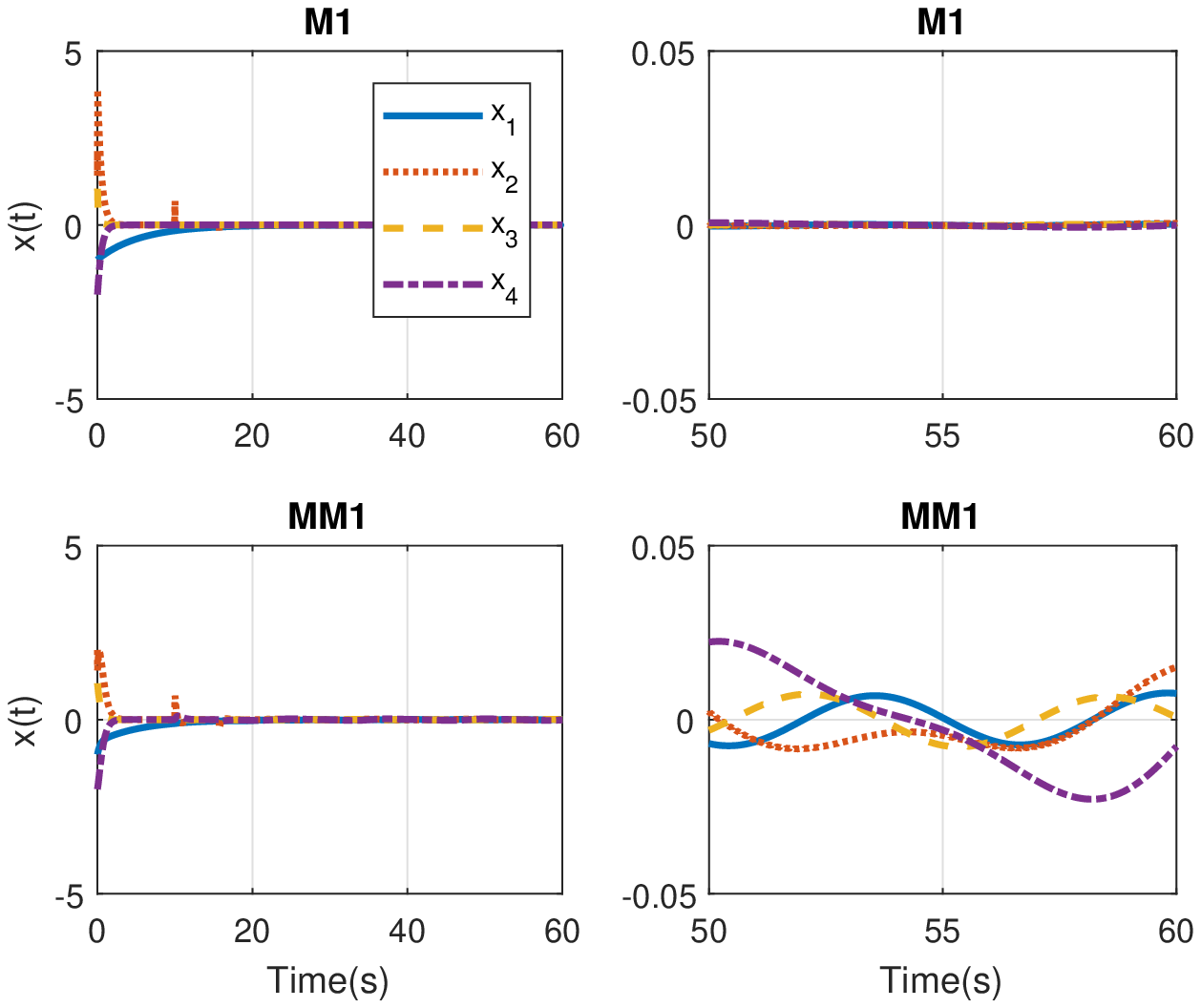}
      \caption{The evolution of the state variables using M1 and MM1 for the noise-free case}
      \label{figx1}
\end{figure}

\begin{figure}[!htb]
      \centering
      \includegraphics[height=2.5in]{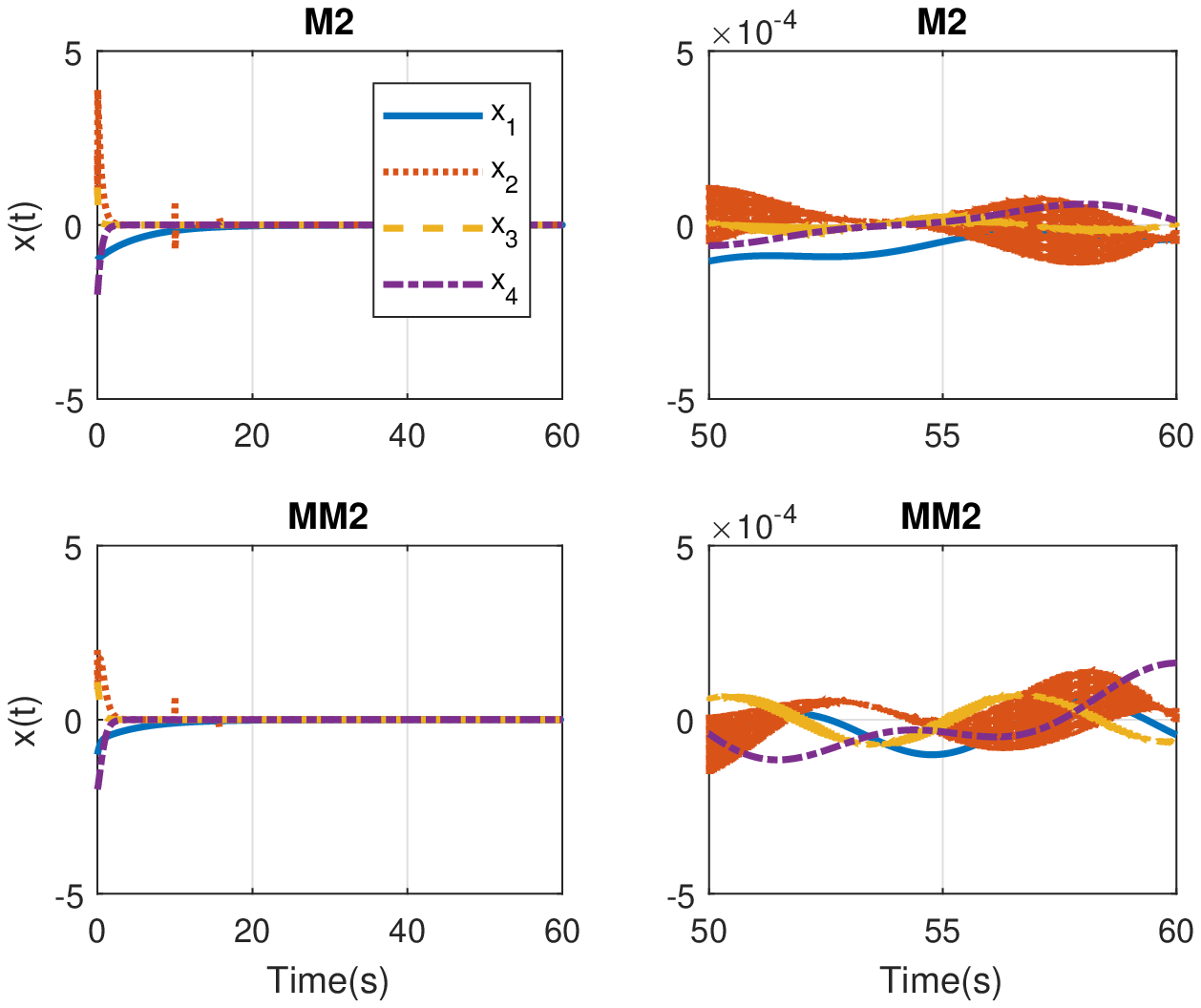}
       \caption{The evolution of the state variables using M2 and MM2 for the noise-free case}
      \label{figx2}
\end{figure}

\begin{figure}[!htb]
      \centering
      \includegraphics[height=2.5in]{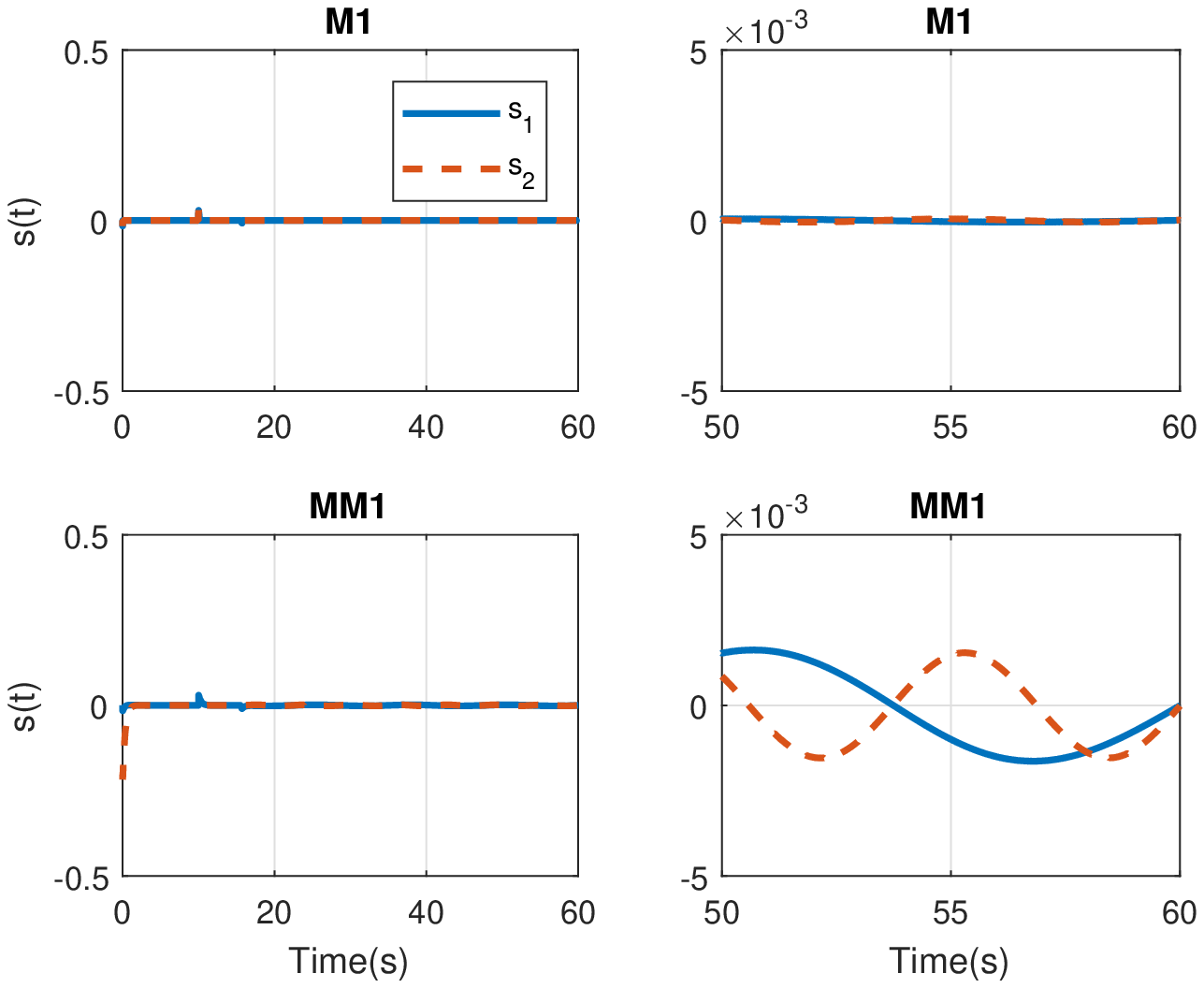}
      \caption{The evolution of the sliding functions using M1 and MM1 for the noise-free case}
      \label{figs1}
\end{figure}

\begin{figure}[!htb]
      \centering
      \includegraphics[height=2.5in]{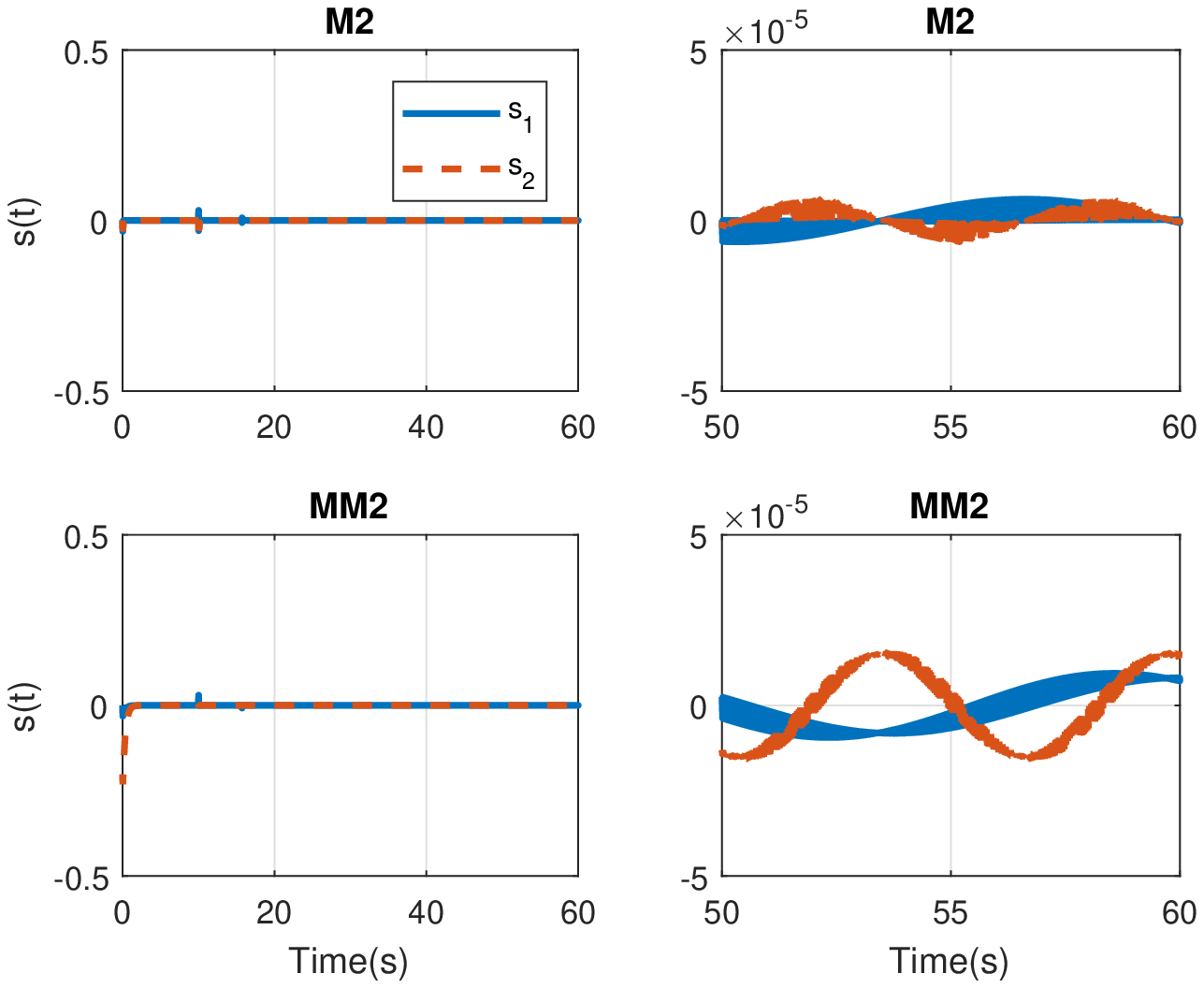}
       \caption{The evolution of the sliding functions using M2 and MM2 for the noise-free case}
      \label{figs2}
\end{figure}

\begin{figure}[!htb]
      \centering
      \includegraphics[height=2.5in]{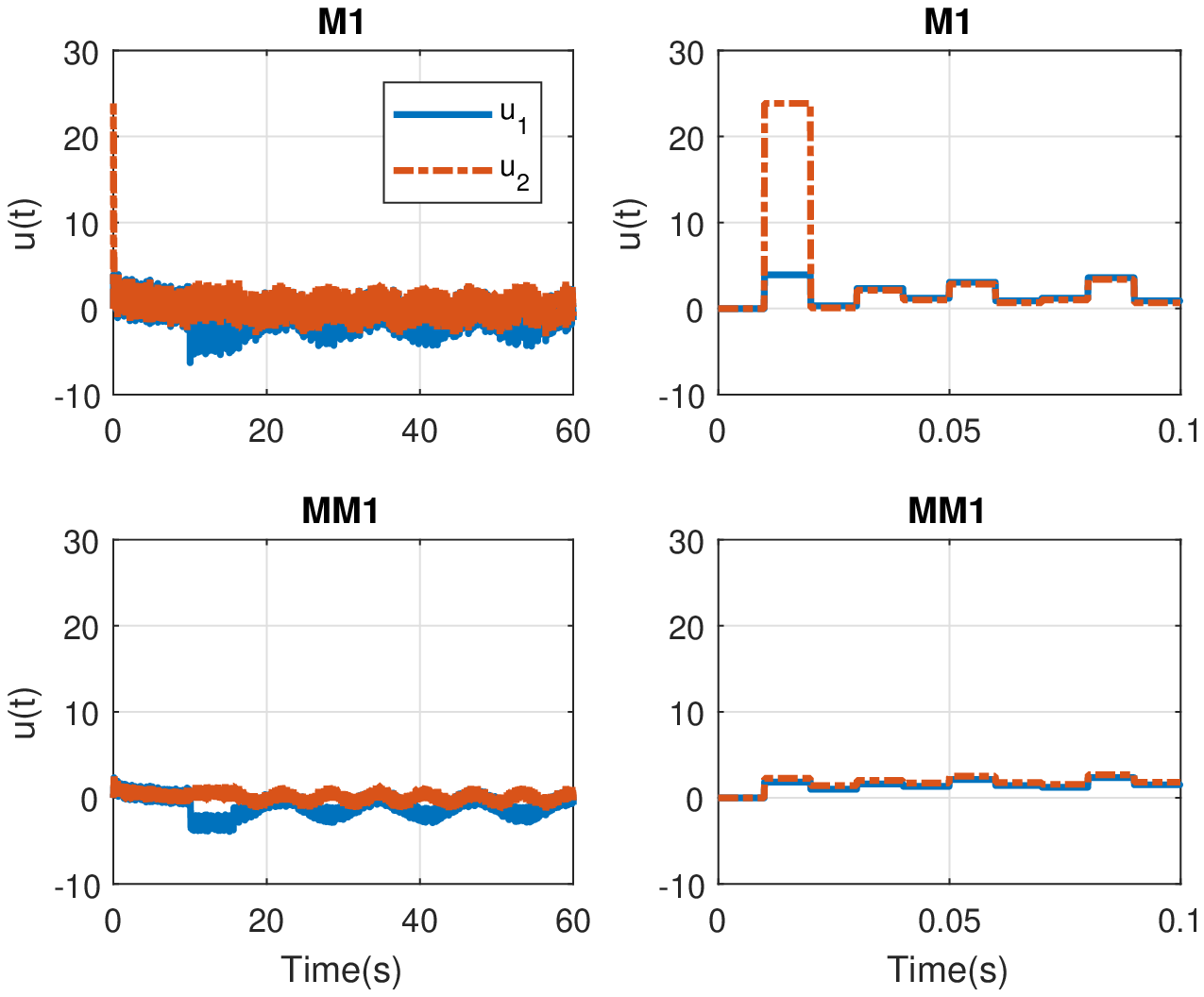}
      \caption{The evolution of the control signals using M1 and MM1 for the noisy case}
      \label{figu1n}
\end{figure}

\begin{figure}[!htb]
      \centering
      \includegraphics[height=2.5in]{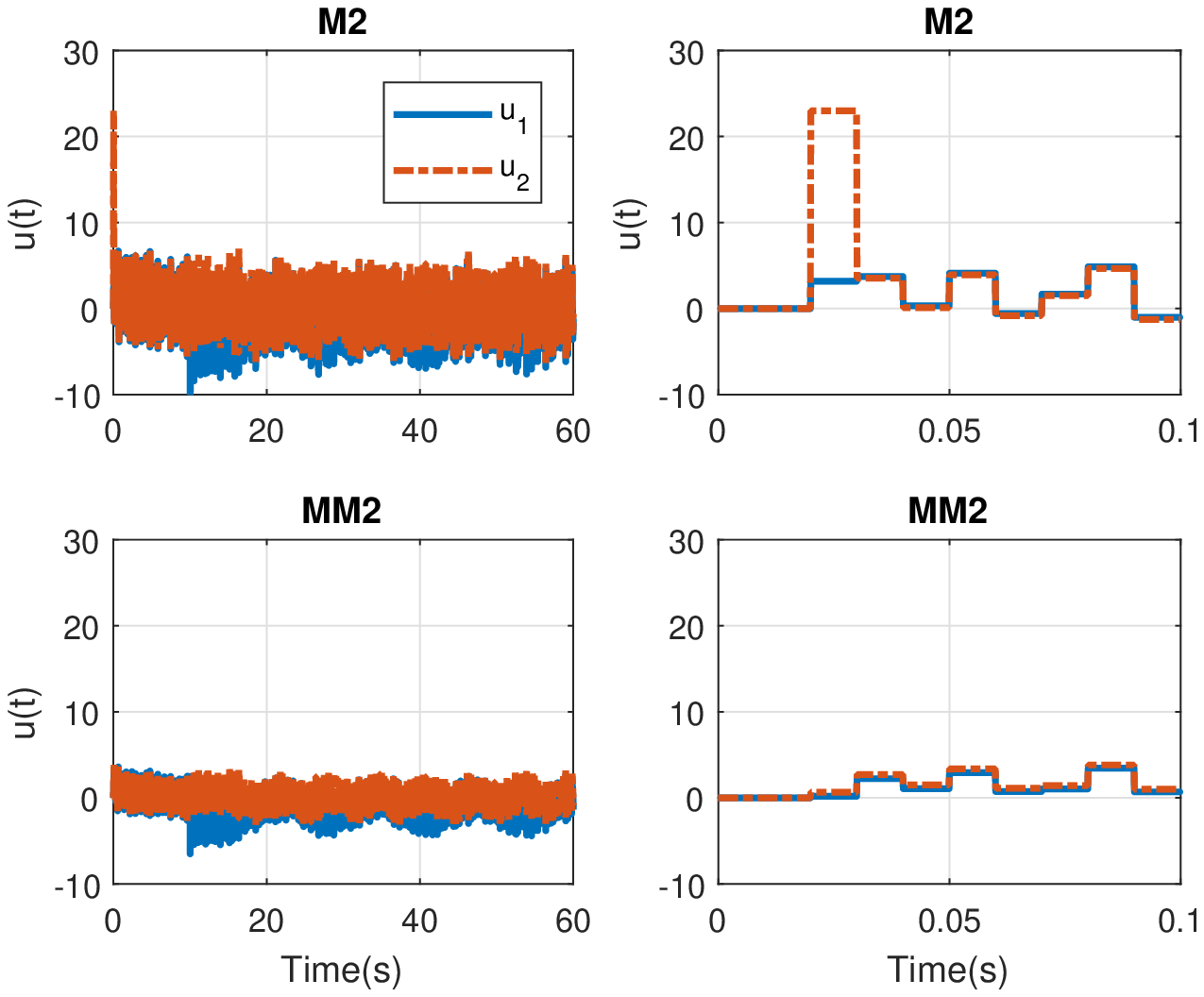}
      \caption{The evolution of the control signals using M2 and MM2 for the noisy case}
      \label{figu2n}
\end{figure}

\begin{figure}[!htb]
      \centering
      \includegraphics[height=2.5in]{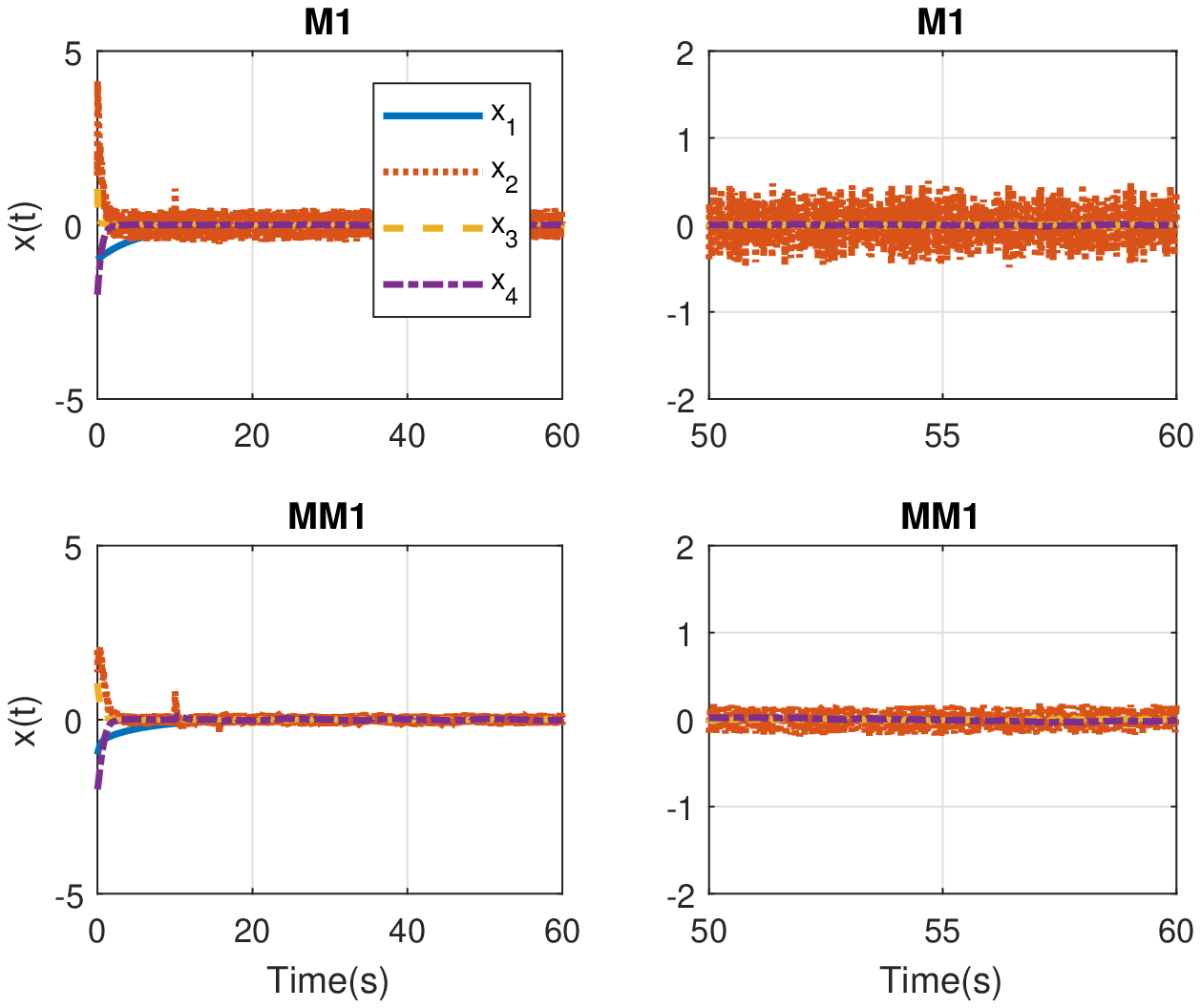}
      \caption{The evolution of the state variables using M1 and MM1 for the noisy case}
      \label{figx1n}
\end{figure}

\begin{figure}[!htb]
      \centering
      \includegraphics[height=2.5in]{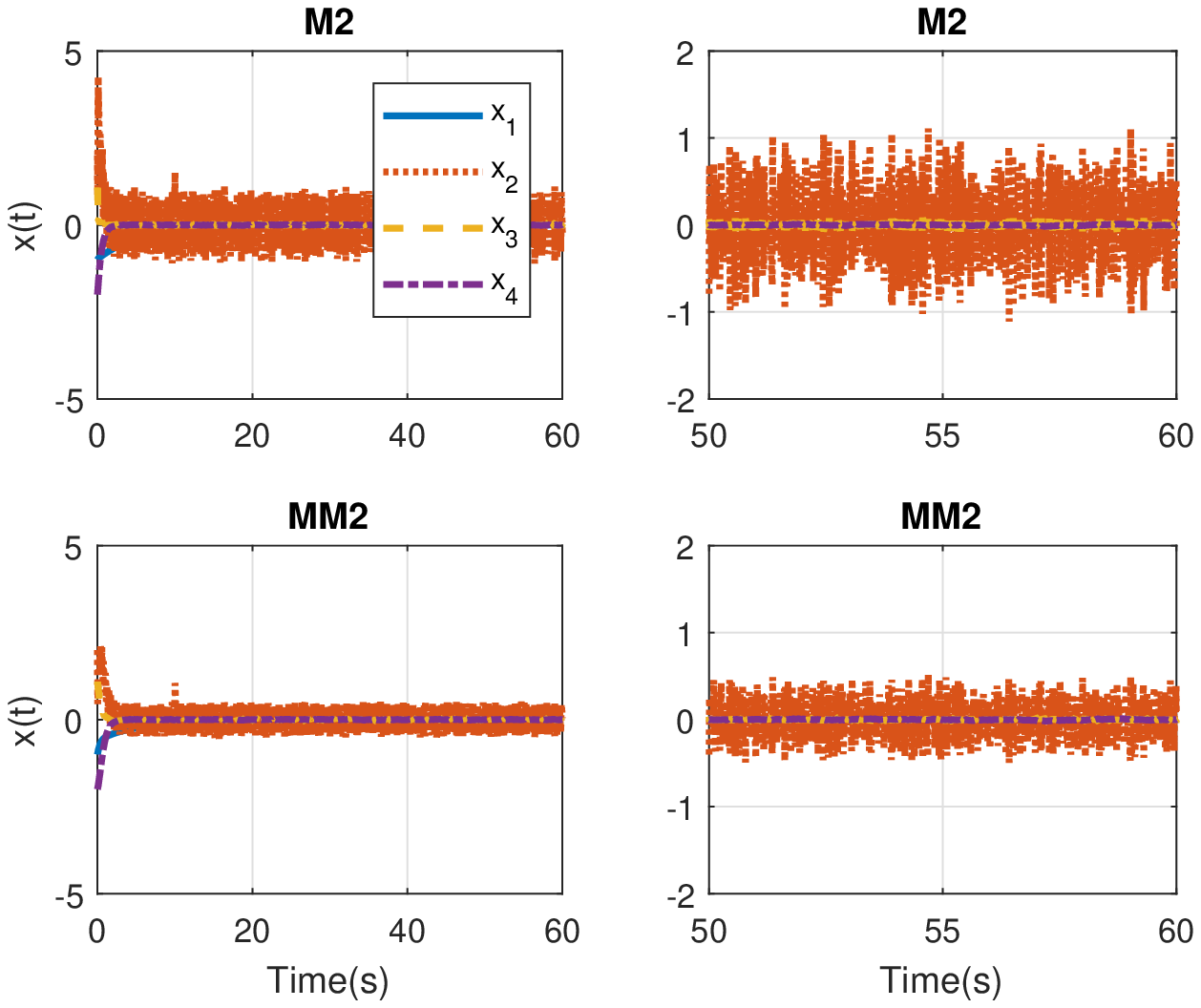}
      \caption{The evolution of the state variables using M2 and MM2 for the noisy case}
      \label{figx2n}
\end{figure}

\begin{figure}[!htb]
      \centering
      \includegraphics[height=2.5in]{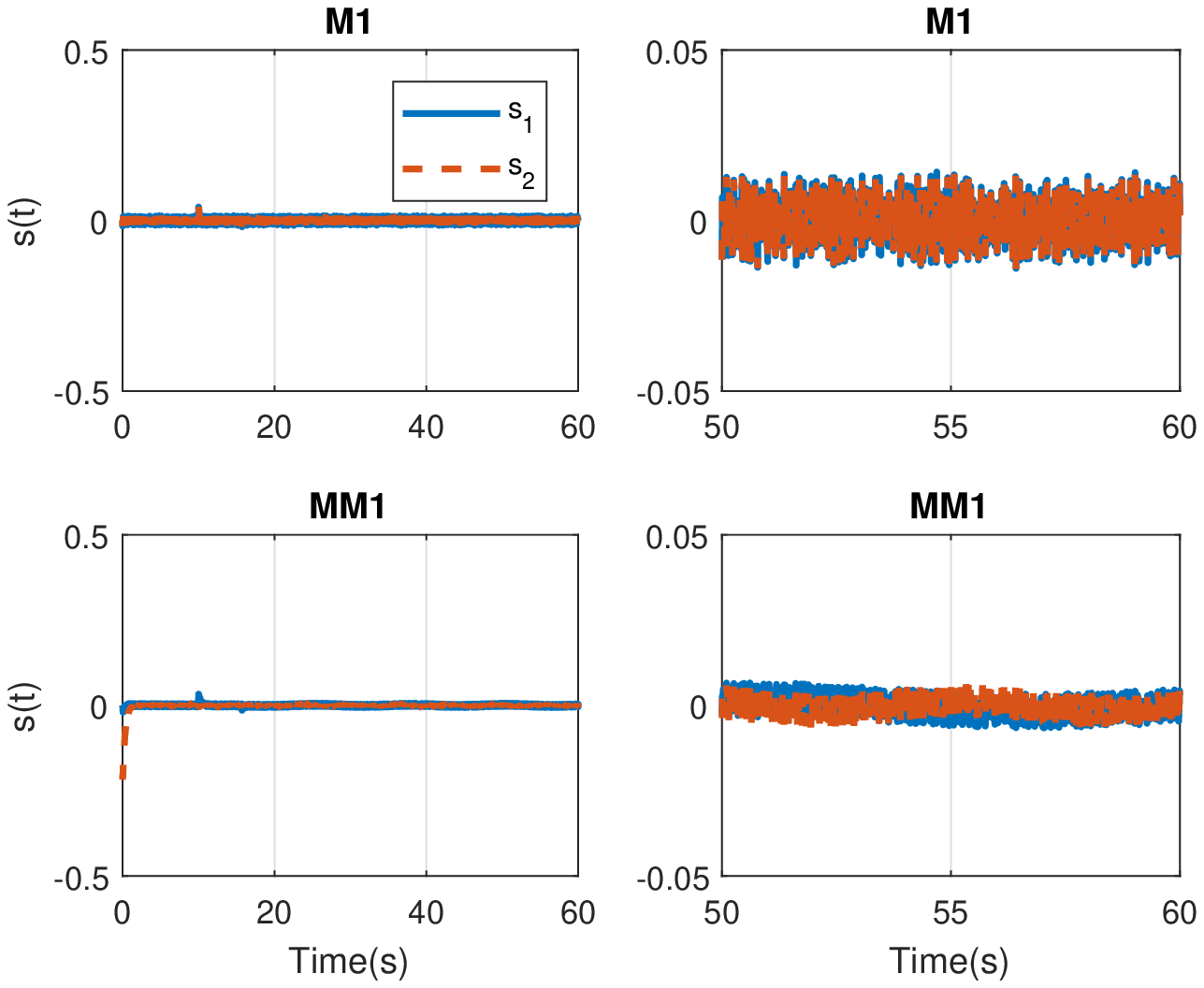}
      \caption{The evolution of the sliding functions using M1 and MM1 for the noisy case}
      \label{figs1n}
\end{figure}

\begin{figure}[!htb]
      \centering
      \includegraphics[height=2.5in]{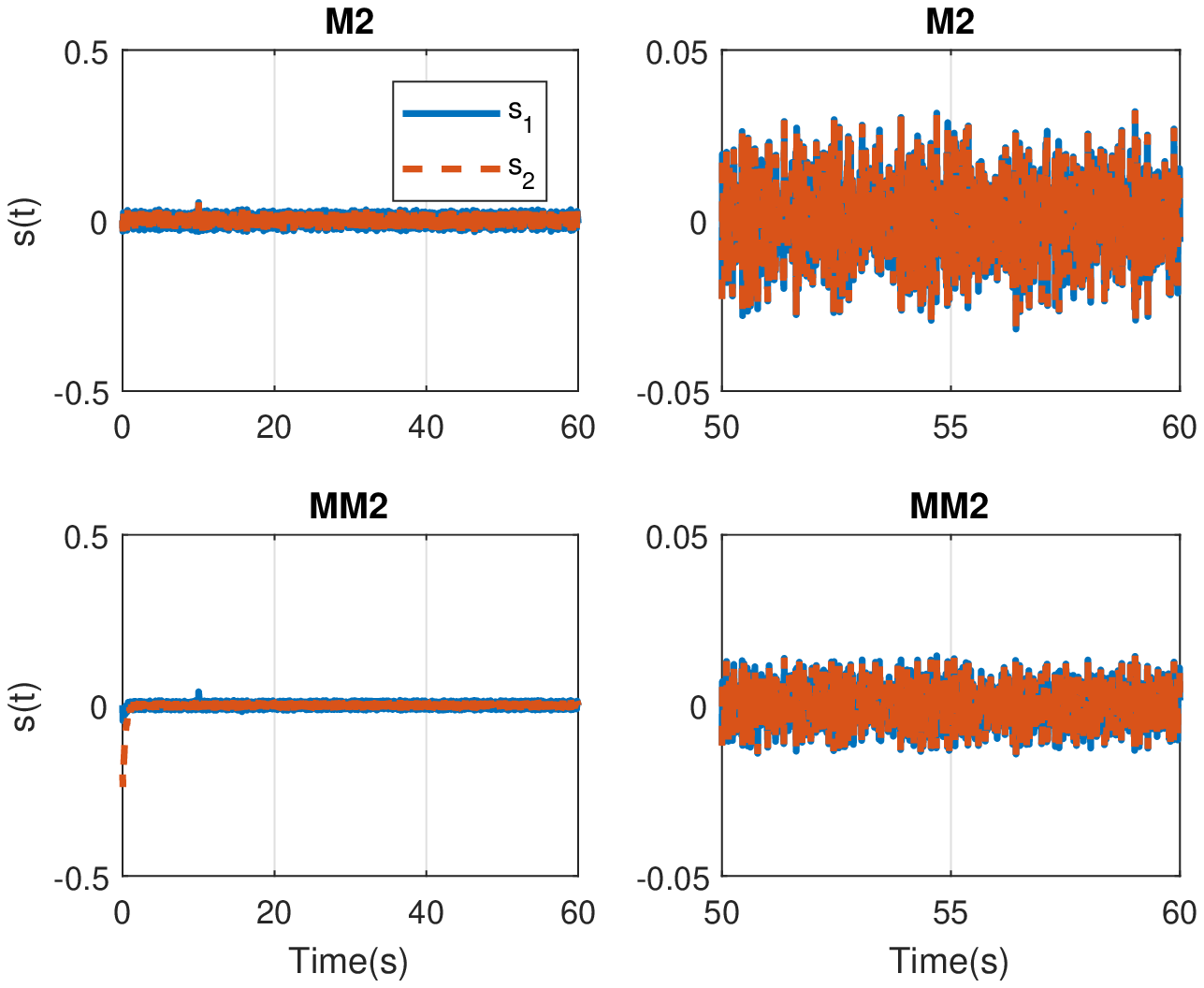}
      \caption{The evolution of the sliding functions using M2 and MM2 for the noisy case}
      \label{figs2n}
\end{figure}

\section{CONCLUSIONS}

The high gain nature of previous output feedback sliding mode control schemes was addressed, wherein a control signal of magnitude $O(1/T)$ could occur. New modifications were proposed to alleviate possible high gain control efforts, which can be of the order of $O(1)$. The theoretical analysis shows that the accuracy of the sliding mode using the modified versions of M1 and M2 are $O(T)$ and $O(T^2)$, while their original forms offer a $O(T^2)$ and $O(T^3)$ boundary layer for the sliding motion respectively. Simulation results have showed the effectiveness of the proposed schemes.

The proposed scheme is applied to linear sampled-data system with relative degree one. Future work will investigate control methods for systems with higher relative degree. Output feedback sliding mode control for nonlinear sampled-data system is also a possible future direction. Practical experiments will be conducted to verify the proposed approach.



\bibliographystyle{apalike-refs}
\bibliography{revised_SMoutput}

\end{document}